\documentclass[a4paper,UKenglish,cleveref, autoref, thm-restate]{lipics-v2021}

\pdfoutput=1 
\hideLIPIcs  

\bibliographystyle{plainurl}

\usepackage{booktabs}
\usepackage{commath}
\usepackage{mathtools}
\usepackage{siunitx}
\usepackage{timetravel}
\usepackage[textsize=tiny]{todonotes}
\usepackage{tabularx}
\usepackage{multirow}
\setCounters{theorem}

\DeclareMathOperator{\dist}{dist}

\newcommand{\niceAngle}{\varphi}

\makeatletter
\DeclareRobustCommand{\bfseries}{%
  \not@math@alphabet\bfseries\mathbf
  \fontseries\bfdefault\selectfont
  \boldmath
}
\makeatother

\title{Efficient Shortest Paths in Scale-Free Networks with Underlying
  Hyperbolic Geometry}

\titlerunning{Shortest Paths in Networks with Underlying Hyperbolic
  Geometry}

\author{Thomas Bläsius}{Karlsruhe Institute of Technology, Karlsruhe, Germany}{thomas.blaesius@kit.edu}{https://orcid.org/0000-0003-2450-744X}{} 
\author{Cedric Freiberger}{Hasso Plattner Institute, University of Potsdam, Potsdam, Germany}{cedric.freiberger@student.hpi.de}{}{} 
\author{Tobias Friedrich}{Hasso Plattner Institute, University of Potsdam, Potsdam, Germany}{tobias.friedrich@hpi.de}{https://orcid.org/0000-0003-0076-6308}{} 
\author{Maximilian Katzmann}{Karlsruhe Institute of Technology, Karlsruhe, Germany}{maximilian.katzmann@kit.edu}{https://orcid.org/0000-0002-9302-5527}{} 
\author{Felix Montenegro-Retana}{Hasso Plattner Institute, University of Potsdam, Potsdam, Germany}{felix.montenegro-retana@student.hpi.de}{}{} 
\author{Marianne Thieffry}{Hasso Plattner Institute, University of Potsdam, Potsdam, Germany}{marianne.thieffry@student.hpi.de}{}{} 

\authorrunning{Bläsius et al.} 

\Copyright{Thomas Bläsius, Cedric Freiberger, Tobias Friedrich, Maximilian Katzmann, Felix Montenegro-Retana and Marianne Thieffry} 

\begin{CCSXML}
<ccs2012>
<concept>
<concept_id>10002950.10003624.10003633.10003638</concept_id>
<concept_desc>Mathematics of computing~Random graphs</concept_desc>
<concept_significance>500</concept_significance>
</concept>
<concept>
<concept_id>10002950.10003624.10003633.10003640</concept_id>
<concept_desc>Mathematics of computing~Paths and connectivity problems</concept_desc>
<concept_significance>300</concept_significance>
</concept>
<concept>
<concept_id>10002950.10003624.10003633.10010917</concept_id>
<concept_desc>Mathematics of computing~Graph algorithms</concept_desc>
<concept_significance>500</concept_significance>
</concept>
<concept>
<concept_id>10003752.10010061.10010069</concept_id>
<concept_desc>Theory of computation~Random network models</concept_desc>
<concept_significance>500</concept_significance>
</concept>
<concept>
<concept_id>10003752.10003809.10003635.10010037</concept_id>
<concept_desc>Theory of computation~Shortest paths</concept_desc>
<concept_significance>500</concept_significance>
</concept>
</ccs2012>
\end{CCSXML}

\ccsdesc[500]{Theory of computation~Random network models}
\ccsdesc[500]{Theory of computation~Shortest paths}
\ccsdesc[500]{Mathematics of computing~Random graphs}
\ccsdesc[300]{Mathematics of computing~Paths and connectivity problems}
\ccsdesc[500]{Mathematics of computing~Graph algorithms}

\keywords{random graphs, hyperbolic geometry, scale-free networks,
  bidirectional shortest path} 

\category{} 

\relatedversion{A preliminary version of the paper appeared in~\cite{bff-espsf-18}.} 
\relatedversiondetails{Preliminary version}{http://doi.org/10.4230/LIPIcs.ICALP.2018.20} 




\nolinenumbers 

\EventEditors{John Q. Open and Joan R. Access}
\EventNoEds{2}
\EventLongTitle{42nd Conference on Very Important Topics (CVIT 2016)}
\EventShortTitle{CVIT 2016}
\EventAcronym{CVIT}
\EventYear{2016}
\EventDate{December 24--27, 2016}
\EventLocation{Little Whinging, United Kingdom}
\EventLogo{}
\SeriesVolume{42}
\ArticleNo{23}

\begin{document}

\maketitle

\begin{abstract}
  A standard approach to accelerating shortest path algorithms on
  networks is the bidirectional search, which explores the graph from
  the start and the destination, simultaneously.  In practice this
  strategy performs particularly well on scale-free real-world
  networks. Such networks typically have a heterogeneous degree
  distribution (e.g., a power-law distribution) and high clustering
  (i.e., vertices with a common neighbor are likely to be connected
  themselves). These two properties can be obtained by assuming an
  underlying hyperbolic geometry.

  To explain the observed behavior of the bidirectional search, we
  analyze its running time on hyperbolic random graphs and prove that
  it is $\mathcal {\tilde O}(n^{2 - 1/\alpha} + n^{1/(2\alpha)} + \delta_{\max})$
  with high probability, where $\alpha \in (1/2, 1)$ controls the power-law
  exponent of the degree distribution, and $\delta_{\max}$ is the maximum
  degree. This bound is sublinear, improving the obvious worst-case
  linear bound. Although our analysis depends on the underlying
  geometry, the algorithm itself is oblivious to it.
\end{abstract}

\section{Introduction}
\label{sec:introduction}

One of the most fundamental graph problems consists of finding a
shortest path between two vertices in a network.  Besides being of
independent interest, many algorithms use shortest path queries as a
subroutine.
On unweighted graphs, such queries can be answered in linear time
using a \emph{breadth-first search (BFS)}.  Though this is optimal in
the worst case, it is not efficient enough when dealing with large
networks or problems involving many shortest path queries.

A way to heuristically improve the run time, is to use a
\emph{bidirectional BFS}~\cite{p-bhspp-69}.  It runs two searches,
simultaneously exploring the graph from the start and the destination.
The shortest path is found once the two search spaces touch.
Being one of the standard heuristics, the bidirectional BFS is widely
used in practice (e.g., in route planning).
On homogeneous networks (where most vertices have similar degrees,
like road networks) this typically leads to a speedup factor of about
two.
However, on heterogeneous networks (having many vertices of low degree
and only few vertices of very high degree, like social networks or the
internet) experiments indicate that the bidirectional BFS yields an
asymptotic running time improvement~\cite{bn-kaabra-16}.

Despite being such a fundamental heuristic, theory completely fails
its main purpose of predicting and explaining the observed behavior.
The theoretical worst case running time overshoots the observations by
a lot.  A more promising approach is the average-case analysis by
Borassi and Natale~\cite{bn-kaabra-16}, which considers instances that
are drawn from certain probability distributions instead of assuming
the worst case.  Their results are summarized in the first row of
Table~\ref{tab:resultComparison}.  The analysis covers a variety of
random graph models.  On the one hand these include homogeneous
networks where the degree distribution has bounded variance,
e.g. Erd\H{o}s-R\'{e}nyi random graphs.  On the other hand, they also
consider heterogeneous networks where the variance of the degree
distribution is unbounded, e.g. Chung-Lu random graphs with power-law
exponent $\beta \in (2, 3)$.  However, the results, again, do not
match what is observed in practice, as it predicts shorter running
times on homogeneous networks than on heterogeneous ones.

\begin{table}[t]
  \centering
  \begin{tabular}{c c c} 
    & \textbf{Homogeneous} & \textbf{Heterogeneous} \\
    \toprule
    \begin{tabular}{@{}c@{}}\textbf{Independent} \\ \textbf{Edges}\end{tabular} & \begin{tabular}{@{}c@{}}Bounded \\ Variance \\[0.6em] $m^{\frac{1}{2} + o(1)}$~\cite{bn-kaabra-16} \end{tabular}  & \begin{tabular}{@{}c@{}}Unbounded \\ Variance \\[0.6em] $m^{\frac{4-\beta}{2} + o(1)}$~\cite{bn-kaabra-16} \end{tabular} \\
    \midrule
    \begin{tabular}{@{}c@{}}\textbf{Underlying} \\ \textbf{Geometry}\end{tabular} & \begin{tabular}{@{}c@{}}Euclidean \\ Random Graphs \\[0.6em] $\Theta(n)$ \footnotesize{(Folklore)} \end{tabular} & \begin{tabular}{@{}c@{}}Hyperbolic \\ Random Graphs \\[0.3em] $\tilde{\mathcal{O}}(n^{2\frac{\beta - 2}{\beta - 1}} + n^{\frac{1}{\beta - 1}})$ \footnotesize{(This paper)} \end{tabular} \\
    \bottomrule
  \end{tabular} 
  \caption{Probabilistic bounds on the running time of the
    bidirectional BFS obtained by analyzing different random graph
    models.  The considered models (and associated results) are
    arranged by the heterogeneity of the corresponding degree
    distributions of the graphs and the (in)dependence of edges.
    Here, $n$ and $m$ denote the number of vertices and edges in the
    graph, respectively.  The parameter $\beta \in (2, 3)$ denotes the
    power-law exponent of the degree distribution in the considered
    heterogeneous networks.  }
  \label{tab:resultComparison}
\end{table}

The fundamental obstacle that prevents the average-case analysis from
producing convincing explanations is that the considered random graph
models are not realistic.  They assume that edges in the graph are
independent of each other.  However, real-world networks typically
exhibit \emph{locality}, i.e., edges in an evolving network tend to
form between vertices that are already close in the network.

We resolve this discrepancy by modeling edge dependencies using
geometry and extend the comparison in Table~\ref{tab:resultComparison}
by adding the second row.
Generally, \emph{geometric random graphs} are obtained by randomly
distributing vertices in some metric space (e.g. the Euclidean plane)
and connecting any two vertices with a probability that depends on
their distance.
In this framework, heterogeneous networks (i.e., networks on which the
bidirectional BFS has been observed to perform particularly well) can
be obtained by using the hyperbolic plane as the underlying geometry.

In this paper, we analyze the bidirectional BFS on random graph models
with an underlying geometry.  We prove that, with high probability,
the bidirectional BFS has a sublinear worst-case running time on the
heterogeneous networks generated by the hyperbolic random graph model.
Additionally, it is not hard to see why there is no asymptotic speedup
on the homogeneous networks generated by the Euclidean random graph
model.  Both results match previous empirical observations.  Finally,
we interpret these insights and discuss how the heterogeneity of the
degree distribution and an underlying geometry affect the running time
of the bidirectional breadth first search.

\paragraph*{Related Work}
\label{sec:related-work}

The research on scale-free networks has gained a lot of attention for
quite some time now.  Therefore, it is no surprise that the
extensively studied problem of computing shortest paths has also been
considered in the context of such graphs~\cite{ask-spqcn-12,
  phzs-fafap-12, lfc-dsspa-17}.  However, the bidirectional search
that was introduced in 1969~\cite{p-bhspp-69} and that has since
become one of the standard search heuristics, has only recently been
examined on scale-free networks.  In fact, there are only two
theoretical explanations for the performance improvements obtained
using this heuristic, both using an average-case analysis that
considers one or more random graph models~\cite{lr-bspagacb-89,
  bn-kaabra-16}.

A model that yields a better representation of real-world networks
than the ones considered before, is the hyperbolic random graph model
introduced by Krioukov et al.~\cite{kpk-h-10}.  The generated graphs
feature a heterogeneous degree distribution, high clustering, and a
small diameter; properties that are often observed in real-world
networks.  These properties emerge naturally from the hyperbolic
geometry.  Moreover, the model is conceptually simple, which makes it
accessible to mathematical analysis.  For these reasons it has gained
popularity in different research areas and has been studied from
different perspectives.

From the network-science perspective, the goal is to gather knowledge
about real-world networks.  This is, for example, achieved by assuming
that a real-world network has a hidden underlying hyperbolic geometry,
which can be revealed by embedding it into the hyperbolic
plane~\cite{ama-m-16, bpk-sihm-10}.

From the mathematical perspective, the focus lies on studying
structural properties.  The degree distribution and
clustering~\cite{gpp-rhg-12}, diameter~\cite{fk-dhrg-18, ms-k-19},
component structure~\cite{bfm-gcrhg-13, km-slcrhg-19}, clique
size~\cite{bfk-chrg-17}, and separation properties~\cite{bfk-hrg-16}
have been studied successfully.

Additionally, there is the algorithmic perspective, which is the focus
of this paper.  Usually algorithms are analyzed by proving worst-case
running times.  Though this is the strongest possible performance
guarantee, it is rather pessimistic as practical instances rarely
resemble worst-case instances.  Techniques leading to a more realistic
analysis include parameterized or average-case complexity.  The latter
is based on the assumption that instances are drawn from a certain
probability distribution.
For hyperbolic random graphs, the maximum clique, as well as the
minimum vertex cover can be computed in polynomial
time~\cite{bfk-chrg-17, bffk-svcpt-20}, and there are several
algorithmic results based on the fact that hyperbolic random graphs
have sublinear tree width~\cite{bfk-hrg-16}.  Moreover, there is a
compression algorithm that can store a hyperbolic random graph using
$\mathcal{O}(n)$ bits in expectation~\cite{bkl-sgirglt-17,p-rgmcs-14}.
Finally, a close approximation of the shortest path between two
vertices can be found using greedy routing, which visits only
$\mathcal{O}(\log \log n)$ vertices for most start--destination
pairs~\cite{bkl-graswp-17}.  The downside of most of these algorithms
is that they need to know the underlying geometry, i.e., the
coordinates of each vertex, which is a rather unrealistic assumption
for real-world networks.  In contrast to that, we analyze an algorithm
that is oblivious to the underlying geometry.

\paragraph*{Outline}

After a brief introduction to hyperbolic random graphs in
Section~\ref{sec:preliminaries}, we examine the bidirectional BFS in
Section~\ref{sec:bbbfs}.  We start by briefly arguing why the
bidirectional BFS gives no asymptotic speedup over the standard BFS on
Euclidean random graphs in Section~\ref{sec:bidirectional-euclidean}.
Afterwards, in Section~\ref{sec:bidir-search-hyperb} we rigorously
analyze the bidirectional BFS on hyperbolic random graphs.
Section~\ref{sec:conc-bounds-sum-deg} contains concentration bounds
that were left out in Section~\ref{sec:bbbfs} to improve readability.
In Section~\ref{sec:conclusion}, we conclude by comparing our
theoretical results to empirical data and interpret them.

\section{Preliminaries}
\label{sec:preliminaries}

Let $G = (V, E)$ be an undirected and unweighted graph.  We denote the
number of vertices and edges with $n$ and $m$, respectively.  The
\emph{neighborhood} of a vertex $v \in V$ is $N(v) = \{ w \in
V~\vert~\{v, w\} \in E \}$.  The \emph{degree} of $v$ is $\deg(v) =
|N(v)|$.  We denote the maximum degree with $\delta_{\max}$.  The soft
$\mathcal{O}$-notation $\mathcal {\tilde O}$ suppresses
poly-logarithmic factors in $n$.

\subsection{The Hyperbolic Plane}

The major difference between hyperbolic and Euclidean geometry is the
exponential expansion of space.  In the hyperbolic plane, a circle of
radius $r$ has area $2\pi(\cosh(r)-1)$ and circumference $2\pi
\sinh(r)$, with $\cosh(x) = (e^x+e^{-x})/2$ and $\sinh(x) =
(e^x-e^{-x})/2$, both growing as $e^x/2 \pm o(1)$.
To identify points, we use polar coordinates with respect to a
designated origin $O$ and a ray starting at $O$.  A point $p$ is
uniquely determined by its \emph{radius}~$r$, which is the distance to
$O$, and the \emph{angle} (or \emph{angular coordinate}) $\niceAngle$
between the reference ray and the line through $p$ and~$O$.
In illustrations, we use the \emph{native representation}, obtained by
interpreting the hyperbolic coordinates as polar coordinates in the
Euclidean plane; see Figure~\ref{fig:native}~(left).  Due to the
exponential expansion, line segments bend towards the origin~$O$.
\begin{figure}
  \centering
  \includegraphics[scale=0.975]{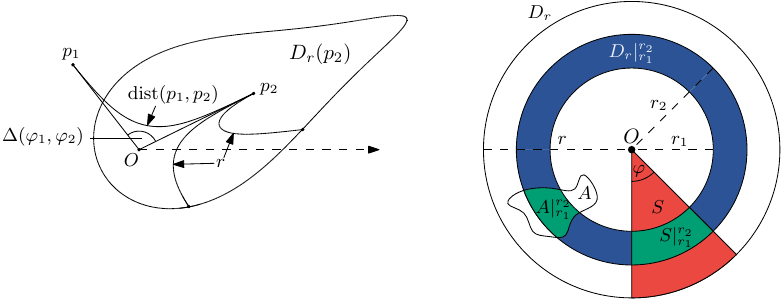}
  \caption{Left: Points and several line segments in the native
    representation of the hyperbolic plane. A disk of radius $r$ is
    centered at~$p_2$.  Right: Geometric shapes and their
    intersections.  Sector $S$ has an angular width of $\varphi$.}
  \label{fig:native}
\end{figure}
Let $p_1=(r_1, \niceAngle_1)$ and $p_2=(r_2, \niceAngle_2)$ be two
points.  The \emph{angular distance} between $p_1$ and $p_2$ is the
angle between the rays from the origin through $p_1$ and $p_2$.
Formally, it is $\Delta(\niceAngle_1, \niceAngle_2) = \pi - |\pi -
|\niceAngle_1 - \niceAngle_2||$.  The hyperbolic distance $\dist(p_1,
p_2)$ is given by
\begin{align*}
  \cosh(\dist(p_1, p_2)) = \cosh(r_1)\cosh(r_2) - \sinh(r_1)\sinh(r_2)\cos(\Delta(\niceAngle_1, \niceAngle_2)).
\end{align*}
Note how the angular coordinates make simple definitions cumbersome as
angles are considered modulo $2\pi$, leading to a case distinction
depending on where the reference ray lies.  Whenever possible, we
implicitly assume that the reference ray was chosen such that we do
not have to compute modulo $2\pi$.  Thus, the above angular distance
between $p_1$ and $p_2$ simplifies to~$|\niceAngle_1 - \niceAngle_2|$.
A third point $p = (r, \niceAngle)$ \emph{lies between} $p_1$
and~$p_2$ if
$\Delta(\niceAngle, \niceAngle_1) + \Delta(\niceAngle, \niceAngle_2) =
\Delta(\niceAngle_1, \niceAngle_2)$.

Throughout the paper, we regularly use different geometric shapes that
are mostly based on disks centered at the origin $O$, as can be seen
in Figure~\ref{fig:native} (right).  With $D_{r}(p)$ we denote the
disk of radius $r$ around a point $p$, i.e., the set of points that
have distance $r$ to $p$.  For disks, that are centered in the origin
$O$, we simplify the notation and set $D_r \coloneqq D_r(O)$. The
restriction of a disk $D_r$ to all points with angular coordinates in
a certain interval is called \emph{sector}, which we usually denote
with the letter $S$.  Its \emph{angular width} is the length of this
interval.  For an arbitrary set of points~$A$, we use $A|_{r_1}^{r_2}$
to denote the restriction of $A$ to points with radii in $[r_1, r_2]$,
i.e., $A|_{r_1}^{r_2} = A \cap (D_{r_2} \setminus D_{r_1})$.

\subsection{Hyperbolic Random Graphs}

A \emph{hyperbolic random graph} is generated by drawing $n$ points
uniformly at random in a disk of the hyperbolic plane and connecting
pairs of points whose distance is below a threshold.  More precisely,
the model depends on two parameters~$C$ and~$\alpha$ that are assumed
to be constants.  The generated graphs have a power-law degree
distribution with power-law exponent $\beta = 2\alpha + 1$ and a
constant average degree depending on $C$.  The parameter $\alpha$ is
assumed to be in the range $(1/2, 1)$, yielding power-law exponents
$\beta \in (2, 3)$.  Exponents outside of this range are atypical for
hyperbolic random graphs. For $\beta < 2$ the average degree of the
generated networks diverges, while for $\beta > 3$ the graphs
decompose into small components (of size sublinear in $n$) and the
variance of the degree distribution is no longer unbounded.  In
contrast, it is unbounded for $\beta \in (2, 3)$, resulting in very
heterogeneous degree distributions.  Moreover, in this range the
obtained networks have a giant component of size
$\Omega(n)$~\cite{bfm-lchmcn-15}, and all other components have at
most polylogarithmic size with high probability~\cite[Corollary
13]{fk-dhrg-18}.  Note that a bidirectional BFS could completely
explore a non-giant component in $\tilde{\mathcal{O}}(1)$ time and
either return the shortest path (if both vertices are in the same
non-giant component) or conclude that the vertices are in different
components.  Therefore, we only consider the case when the two
considered vertices are both in the giant component in the remainder
of the paper.

When generating a hyperbolic random graph, the $n$ points are sampled
within the disk $D_R$ of radius $R = 2\log n + C$.  For each vertex,
the angular coordinate is drawn uniformly from $[0, 2\pi]$.  Its
radius $r$ is sampled according to the probability density function
$f(r)$, which can then be used to define the joint distribution of
angles and radii $f(r, \varphi)$.  They are given by
\begin{align}
\label{eq:probDensity}
  f(r) = \frac{\alpha \sinh(\alpha r)}{\cosh(\alpha R) - 1} = \Theta(e^{-\alpha(R - r)}) ~\quad\text{and}~\quad f(r, \varphi) = \frac{1}{2\pi} f(r)
\end{align}
for $r \in [0, R]$.  For $r > R$, $f(r) = f(r, \varphi) = 0$.  Two
vertices are connected by an edge if and only if their hyperbolic
distance is at most~$R$.  The above probability distribution is a
natural choice as the probability for a vertex ending up in a certain
region is proportional to its area (at least for $\alpha = 1$).  Note
that the exponential growth in $r$ reflects the fact that the area of
a disk grows exponentially with the radius.  It follows that a
hyperbolic random graph has few vertices with high degree close to the
center of the disk and many vertices with low degree near its
boundary.  The following lemma is common knowledge; for the sake of
completeness we give a short proof.

\begin{lemma}
  \label{lem:smallerRadiusIncreasesNeighborhood}
  Let $G$ be a hyperbolic random graph. Furthermore, let $v_1, v_2$ be
  two vertices with radii $r_1 \le r_2 \le R$, respectively, and with the
  same angular coordinate. Then $N(v_2) \subseteq N(v_1)$.
\end{lemma}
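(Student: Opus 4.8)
The plan is to reduce the set-containment statement to a pointwise comparison of adjacency conditions, and then to a single monotonicity check. Since $v_1$ and $v_2$ share an angular coordinate, every other vertex $w = (r_w, \niceAngle_w)$ has the \emph{same} angular distance $\Delta := \Delta(\niceAngle_w, \niceAngle_1) = \Delta(\niceAngle_w, \niceAngle_2)$ to both of them, so the only thing that differs between $\dist(v_1, w)$ and $\dist(v_2, w)$ in the distance formula is the radius, $r_1$ versus $r_2$. Fixing such a $w$ with $0 < r_w < R$ (the degenerate cases $r_1 = 0$ or $r_w = 0$ are immediate, since then the relevant distance reduces to a single radius, which is at most $R$), I would rewrite the adjacency condition $\dist(v_i, w) < R$ as $\cosh(\dist(v_i, w)) < \cosh(R)$, plug in the distance formula, and isolate the $\cos(\Delta)$ term. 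Dividing by the positive quantity $\sinh(r_i)\sinh(r_w)$ turns ``$w \in N(v_i)$'' into the equivalent statement
\[
  \cos(\Delta) > h(r_i), \qquad \text{where} \quad h(r) = \frac{\cosh(r)\cosh(r_w) - \cosh(R)}{\sinh(r)\sinh(r_w)}.
\]

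With this reformulation the lemma follows at once, provided $h$ is non-decreasing on $(0, R]$: then $h(r_1) \le h(r_2)$, so any $w$ with $\cos(\Delta) > h(r_2)$ (i.e.\ $w \in N(v_2)$) automatically satisfies $\cos(\Delta) > h(r_1)$ (i.e.\ $w \in N(v_1)$), which is exactly $N(v_2) \subseteq N(v_1)$. The second step is therefore to differentiate $h$ with respect to $r$. A short computation, using $\cosh^2(r) - \sinh^2(r) = 1$ to collapse the numerator, should give
\[
  h'(r) = \frac{\cosh(R)\cosh(r) - \cosh(r_w)}{\sinh(r_w)\sinh^2(r)}.
\]

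The main (and essentially only) obstacle is verifying $h'(r) \ge 0$, and this is precisely where the hypothesis $r_w \le R$ enters: since $\cosh$ is increasing, $r_w \le R$ yields $\cosh(r_w) \le \cosh(R)$, and since $\cosh(r) \ge 1$ we obtain $\cosh(R)\cosh(r) \ge \cosh(R) \ge \cosh(r_w)$, so the numerator is non-negative; the denominator is positive on $(0, R]$. Hence $h$ is non-decreasing, which completes the argument. I would close by remarking that this also explains the geometric intuition behind the lemma: although the hyperbolic distance from $w$ to a point at angle $\niceAngle_1$ need \emph{not} be monotone in that point's radius, the maximal angular distance at which adjacency still holds does shrink monotonically as the radius grows, so the vertex closer to the origin reaches strictly further around the disk.
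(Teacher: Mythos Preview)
Your proof is correct, but it follows a genuinely different route from the paper's. The paper gives a two-line geometric argument: for $w \in N(v_2)$, the geodesic triangle $v_2 O w$ lies inside the disk of radius $R$ around $w$ (because both $v_2$ and $O$ do, and hyperbolic disks are convex); since $v_1$ sits on the segment $O v_2$, it lies in that triangle and hence in the disk, giving $\dist(v_1,w)\le R$. Your argument instead works analytically from the distance formula, reducing the claim to the monotonicity of the function $h(r) = \bigl(\cosh r\,\cosh r_w - \cosh R\bigr)/\bigl(\sinh r\,\sinh r_w\bigr)$ and verifying $h'(r)\ge 0$ directly.

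Each approach has something to recommend it. The paper's proof is shorter and conceptually clean, but it silently relies on the convexity of hyperbolic disks, a fact that is not stated anywhere in the paper. Your proof is entirely self-contained within the formulas already introduced in the preliminaries, and the computation of $h'$ is exactly the same one underlying the monotonicity of the threshold angle $\theta(r_1,r_2)$ in its first argument---so your closing remark about ``the vertex closer to the origin reaches further around the disk'' is precisely the right intuition, and your argument makes it quantitative. One tiny cosmetic point: you restrict to $0 < r_w < R$ but only discuss the degenerate cases $r_1 = 0$ and $r_w = 0$; the case $r_w = R$ is not degenerate and is already covered by your main computation (the inequality $\cosh r_w \le \cosh R$ needs only $r_w \le R$), so you can simply widen the restriction to $0 < r_w \le R$.
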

\begin{proof}
  Let $w \in N(v_2)$, i.e., $\dist(v_2, w) \le R$. Now consider the
  triangle $v_2Ow$, which is completely contained in the disk of
  radius $R$ around $w$ (since $\dist(v_2, w) \le R $ and $r(w) \le
  R$). Since disks are convex and $v_1$ lies on the line from $O$ to
  $v_2$, it is part of the triangle and therefore also contained in
  this disk.  Consequently, $\dist(v_1, w) \le R$ and thus $w \in N(v_1)$.
\end{proof}

Given two vertices with fixed radii $r_1$ and $r_2$, their hyperbolic
distance grows with increasing angular distance.  The maximum angular
distance such that they are still
adjacent~\cite[Lemma~3.1]{gpp-rhg-12}~is
\begin{align}
  \label{eq:maxAdjacentAngle}
  \theta(r_1, r_2) &= \arccos\left( \frac{\cosh(r_1)\cosh(r_2) - \cosh(R)}{\sinh(r_1)\sinh(r_2)} \right) \notag \\
              &= 2e^{\frac{R - r_1 - r_2}{2}}(1 + \Theta(e^{R - r_1 - r_2})),
\end{align}
assuming $r_2 \ge R - r_1$.  Otherwise, we have $r_1 + r_2 < R$,
meaning two vertices with these radii are adjacent, independent of
their angular distance.

The probability that a sampled vertex falls into a given subset $A
\subseteq D_R$ of the disk is given by its probability measure $\mu(A)
= \iint_Af(r, \varphi) \dif \varphi \dif r$, which can be thought of
as the area of $A$.  There are two types of regions we encounter
regularly: disks $D_r$ with radius $r$ centered at the origin and
disks $D_R(r, \varphi)$ of radius $R$ centered at a point $(r,
\varphi)$.  Note that the measure of $D_R(r, \varphi)$ gives the
probability that a random vertex lies in the neighborhood of a vertex
with position $(r, \varphi)$.  Gugelmann et
al.~\cite[Lemma~3.2]{gpp-rhg-12} showed that
\begin{align}
  \label{eq:originBallApproximation}
  &\mu(D_r) = e^{-\alpha(R - r)}(1 + o(1)), \text{ and}\\
  \label{eq:expectedDegree}
  &\mu(D_R(r, \varphi)) = \Theta(e^{-r/2}).
\end{align}

For a given region $A \subseteq D_R$ of the disk, let $X_1, \dots,
X_n$ be random variables with $X_i = 1$ if vertex $i$ lies in $A$ and
$X_i = 0$ otherwise.  Then $X = \sum_{i=1}^n X_i$ is the number of
vertices lying in $A$.  By the linearity of expectation, we obtain
that the expected number of vertices in $A$ is $\mathbb E[X] = \sum_{i
  = 1}^{n} \mathbb E[X_i] = n \mu(A)$.

Often, determining the expected value of a random variable is not
sufficient to obtain meaningful statements.  Therefore, we
additionally classify events depending on how likely they are to
occur.  We say that an event holds \emph{with high probability}, if it
occurs with probability $1 - \mathcal{O}(1/n)$.  Moreover, we say that
an event holds \emph{asymptotically almost surely} if it occurs with
probability $1 - o(1)$.

To show that certain random variables are concentrated around their
expectation (i.e., with high probability the outcome does not deviate
much from the expected value) we regularly use the following
Chernoff-Hoeffding bound.
%


\begin{theorem}[{Chernoff Bound~\cite[Theorem 1.1]{dp-cmara-12}}]
  \label{thm:chernoff-bound-original}
  Let $X_1, \dots, X_n$ be independent random variables with $X_i \in
  \{0, 1\}$ and let $X$ be their sum.  Then, 
  \begin{align*}
    \Pr[X > t] &\le 2^{-t} &&\text{for $t > 2e\mathbb{E}[X]$~and}\\
    \Pr[X < (1 - \varepsilon)\mathbb{E}[X]] &\le e^{- \varepsilon^2/2 \cdot \mathbb{E}[X]} &&\text{for $\varepsilon \in (0, 1)$}.
  \end{align*}
\end{theorem}

Usually, it suffices to show that a random variable does not exceed a
certain upper bound or drop below a lower bound with high probability.
The following corollaries show that sufficiently large upper and lower
bounds on the expected value suffice to obtain concentration.

\begin{corollary}
  \label{col:chernoff-bound}
  Let $X_1, \dots, X_n$ be independent random variables with $X_i \in
  \{0, 1\}$ and let $X$ be their sum. Further, let $f(n) =
  \Omega(\log(n))$ be such that $\mathbb{E}[X] \le f(n)$ and let $c$
  be a constant.  Then, $X = \mathcal{O}(f(n))$ holds with probability
  $1 - O(n^{-c})$.
\end{corollary}
\begin{proof}
  We prove the statement by showing that the probability for the
  complementary event (i.e., $X$ is more than a constant factor larger
  than $f(n)$) is $\mathcal{O}(n^{-c})$ for any $c$.  Since
  $\mathbb{E}[X] \le f(n)$, we can choose a constant $c_1$
  sufficiently large such that $c_1 f(n) > 2e \mathbb{E}[X]$.  Thus,
  by Theorem~\ref{thm:chernoff-bound-original} it holds that
  \begin{align*}
    \Pr[X > c_1f(n)] \le 2^{-c_1f(n)}.
  \end{align*}
  Moreover, we have $f(n) = \Omega(\log n)$.  Consequently, there
  exists another constant $c_2$ such that $f(n) \ge c_2\log n$ for
  sufficiently large $n$.  We obtain
  \begin{align*}
    \Pr[X > c_1f(n)] \le 2^{-c_1 c_2 \log n} \le n^{-c_1 c_2}
  \end{align*}
  for $n$ sufficiently large.  Finally, we can chose $c_1$ such that
  $c_1 > c/c_2$, which yields the claim.
\end{proof}

\begin{corollary}
  \label{col:chernoff-lower-bound}
  Let $X_1, \dots, X_n$ be independent random variables with $X_i \in
  \{0, 1\}$ and let $X$ be their sum. Further, let $f(n) = \omega(\log
  n)$ be such that $f(n) \le \mathbb{E}[X]$ and let $c$ be a constant.
  Then, $X \in \Omega(f(n))$ holds with probability $1 - O(n^{-c})$.
\end{corollary}
\begin{proof}
  Analogous to the proof of Corollary~\ref{col:chernoff-bound} we
  prove the statement by showing that the probability for the
  complementary event (i.e., $X$ is more than a constant factor
  smaller than $f(n)$) is $\mathcal{O}(n^{-c})$ for any $c$.  Let
  $\varepsilon$ be a constant with $\varepsilon \in (0, 1)$.  The
  following inequalities are obtained by first using the fact that
  $f(n) \le \mathbb{E}[X]$, applying the second statement of
  Theorem~\ref{thm:chernoff-bound-original}, again applying $f(n) \le
  \mathbb{E}[X]$, and finally using $f(n) \in \omega(\log n)$:
  \begin{align*}
    \Pr[X < (1 - \varepsilon)f(n)] &\le \Pr[X < (1-\varepsilon)\mathbb{E}[X]] \\
                                   &\le e^{-\varepsilon^2/2 \cdot \mathbb{E}[X]} \\
                                   &\le e^{-\varepsilon^2/2 \cdot f(n)} \\
                                   &= e^{-\varepsilon^2/2 \cdot \omega(\log n)} \\
                                   &= n^{-\omega(1)}.
  \end{align*}
\end{proof}

Finally, the following lemma shows that statements about the
neighborhood of a vertex with fixed angular coordinate can be extended
to hold for arbitrary angular coordinates, with a small penalty in
certainty.

\begin{figure}[t]
  \centering
  \includegraphics{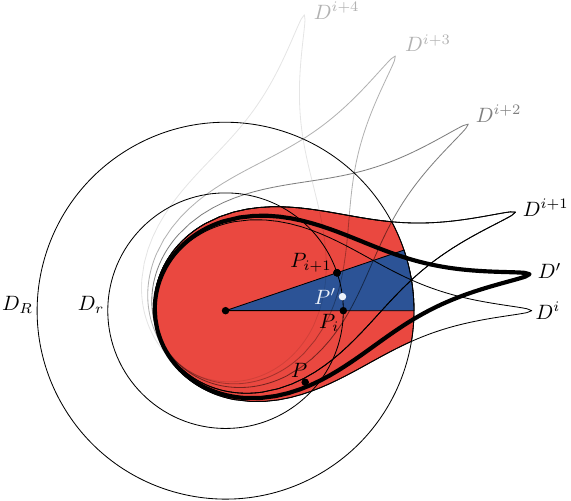}
  \caption{Visualization of the proof of
    Lemma~\ref{lem:arbitrary-angle}.  When constrained to the disk
    $D_R$, the disk $D'$ (bold) with center $P'$ at radius $r$ is
    completely contained in two consecutive disks $D^i$ and $D^{i +
      1}$ (red region).  Point $P_i$ is between $P$ and $P'$.}
  \label{fig:disk-cover}
\end{figure}

\begin{lemma}
  \label{lem:arbitrary-angle}
  Let $G$ be a hyperbolic random graph, let $X_w \ge 0$ for $w \in V$
  be random variables, and let $X(D) = \sum_{w \in D}X_w$ for $D
  \subseteq D_{R}$.  Further, let $\boldsymbol{D}_{R}(r)$ be the set
  of disks of radius $R$ with center at radius $r$.  If for each $D
  \in \boldsymbol{D}_{R}(r)$ it holds that $\Pr[X(D) \le f(n)] \ge 1 -
  p$, then $\Pr[\forall D \in \boldsymbol{D}_{R}(r) \colon X(D) \le
  2f(n)] \ge 1 - \mathcal{O}(np)$.
\end{lemma}
\begin{proof}
  Let $D' \in \boldsymbol{D}_R(r)$ be a disk with radius $R$ centered
  at radius $r$ and arbitrary angular coordinate.  To bound~$X(D')$,
  we cover the disk $D_{R}$ with a circular sequence of $n'$ disks
  $D^1, \dots, D^{n'}$, such that $D'$ is completely contained in two
  consecutive disks (when constrained to the whole disk $D_{R}$).
  That is, there exists an $i \in \{1, \dots, n'\}$ such that $D'
  \subseteq D^i \cup D^{i + 1}$.  Since $X_w \ge 0$ for all $w \in V$,
  it then holds that
  \begin{align*}
    X(D') = \sum_{w \in D'} X_w \le \sum_{D^i \cup D^{i + 1}} X_w \le \sum_{w \in D^i} X_w + \sum_{w \in D^{i+1}} X_w = X(D^i) + X(D^{i + 1}).
  \end{align*}
  Since $\Pr[X(D) \le f(n)] \ge 1 - p$ holds for each $D \in
  \boldsymbol{D}_{R}(r)$, we can apply the union bound to conclude
  that $X(D^i) \le f(n)$ holds for all $i \in \{0, \dots, n'\}$ with
  probability $1 - n'p$.  Consequently, $X(D') \le 2f(n)$ with
  probability $1 - n'p$.

  To complete the proof, it remains to show that there exists such a
  sequence $D^1, \dots, D^{n'}$ with $n' \in \mathcal{O}(n)$.  See
  Figure~\ref{fig:disk-cover} for an illustration of how the sequence
  is constructed.  All disks $D^i$ for $i \in \{1, \dots, n'\}$ have
  their center at radius $r$.  The center of the first disk is placed
  at angular coordinate $0$ and each subsequent disk is placed at an
  angular distance of $2\theta(r, R)$ (see
  Equation~\eqref{eq:maxAdjacentAngle}) to its predecessor until the
  whole disk is covered.  Note that, as a consequence, the boundaries
  of two consecutive disks intersect at the boundary of the whole disk
  $D_{R}$.

  Let $P'$ be the center of $D'$.  To see that $D'$ is contained in
  two consecutive disks $D^i$ and $D^{i + 1}$ (when constrained to the
  whole disk $D_{R}$), first note that there exists an $i \in \{1,
  \dots, n'\}$ such that $P'$ is between the centers $P_i$ and $P_{i +
    1}$ of two consecutive disks $D^i$ and $D^{i + 1}$.  We show that
  any point $P \in D'$ is contained in $D^i \cup D^{i + 1}$.  Clearly,
  $D^i \cup D^{i + 1}$ contains all points between $P_i$ and $P_{i +
    1}$ (blue region in Figure~\ref{fig:disk-cover}).  For the case
  where $P$ does not lie between $P_i$ and $P_{i + 1}$, assume without
  loss of generality, that $P_i$ is between $P$ and $P'$, as depicted
  in Figure~\ref{fig:disk-cover}.  Since $\dist(P, P') \le R$ and
  since $P'$ and $P_i$ have the same radius but $P_i$ is between $P$
  and $P'$, it follows that $\dist(P, P_i) \le R$, and thus $P \in
  D^i$.  Finally, it remains to show that $n' = \mathcal{O}(n)$ disks
  are sufficient to cover the whole disk $D_{R}$.  Since two
  consecutive disks are placed at an angular distance of $2\theta(r,
  R)$, we need $n' = 2\pi / (2\theta(r, R)) = \mathcal{O}(1/\theta(r,
  R))$ disks.  Since $\theta(r, R) \ge \theta(R, R)$, it follows that
  $n' = \mathcal{O}(1/\theta(R, R)) = \mathcal{O}(e^{R/2})$ due to
  Equation~\eqref{eq:maxAdjacentAngle}.  Substituting $R = 2\log(n) +
  C$ then yields the claim.
\end{proof}

\section{Bidirectional Breadth-First Search}
\label{sec:bbbfs}

\begin{figure}[t]
  \centering
  \includegraphics{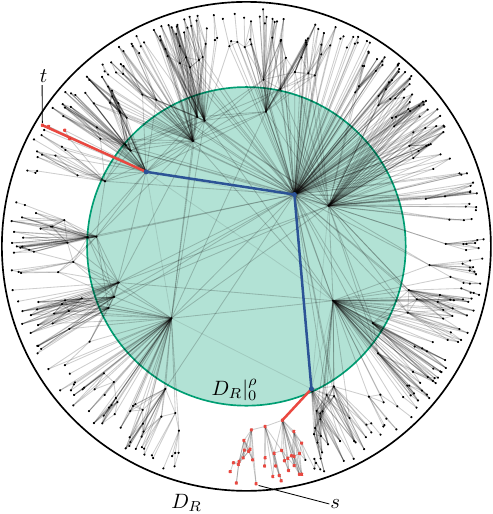}
  \caption{Visualization of the two phases of each BFS in a hyperbolic
    random graph.  Vertices that are visited during the first phase
    are red.  The red edges denote the first encounter of a vertex in
    the inner disk $D_R|_0^\rho$ (green region).  This corresponds to
    the first step in the second phase.  The last step then leads to a
    common neighbor via the blue edges.}
  \label{fig:twoPhases}
\end{figure}

In this section, we analyze the running time of the bidirectional BFS
and obtain an upper bound on the maximum running time over all
possible start--destination pairs.  Our results are summarized in the
following main theorem.

\begin{theorem}
  \label{thm:main-theorem}
  Let $G$ be a hyperbolic random graph.  With high probability the
  shortest path between any two vertices in~$G$ can be computed in
  $\mathcal{\tilde O}(n^{2-1/\alpha} + n^{1/(2\alpha)} +
  \delta_{\max})$ time.
\end{theorem}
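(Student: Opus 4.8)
The plan is to treat the drawing of the graph as the only source of randomness: first I would establish, with high probability, a small collection of structural facts using the Chernoff bound of Corollary~\ref{cor:chernoff-bound}, and then, conditioned on these facts, bound the number of edges touched by the bidirectional search \emph{deterministically and uniformly over all pairs} $s,t$. The geometric backbone is a threshold radius $\rho$ splitting the disk into an inner disk $\innerDisk = D_0^\rho$ and an outer band $\outerBand = D_\rho^R$, with $\rho$ chosen as the radius at which vertex degrees reach the maximum-degree scale $n^{1/(2\alpha)}$, i.e.\ $\rho = (2-1/\alpha)\log n$. Throughout I rely on the fact that, by \eqref{eq:expectedDegree}, a vertex at radius $r$ has expected degree $\Theta(n e^{-r/2})$, so radii shrink as degrees grow and the dense core near the origin $O$ acts as a hub through which short paths are routed, matching the two phases of each search depicted in Figure~\ref{fig:twoPhases}.

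For the \textbf{first phase} I would argue that each individual search reaches $\innerDisk$ after touching only few and only low-degree vertices. Lemma~\ref{lem:smallerRadiusIncreasesNeighborhood}, together with the fact that central vertices are adjacent to almost everything, lets me show that the frontier makes inward progress each layer, so after a controlled number of layers it enters $\innerDisk$. The work of this phase is the sum of degrees of the explored vertices of $\outerBand$, and the goal is to bound it by $\mathcal{\tilde O}(n^{2-1/\alpha})$. The subtlety is that the exploration is angularly narrow near the boundary but widens as it approaches $\innerDisk$, so I would cover the explored region by sectors, bound the number of vertices in each radius range via the measure estimate \eqref{eq:originBallApproximation} and Corollary~\ref{cor:chernoff-bound}, weight each count by the degree $\Theta(n e^{-r/2})$, and sum; the exponent $2-1/\alpha$ then emerges from the chosen $\rho$.

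For the \textbf{second phase} I would use that all vertices of $\innerDisk$ are so central that they share a common neighbor even closer to $O$: once both searches have placed a vertex in $\innerDisk$, a single further step along the edges guaranteed by \eqref{eq:maxAdjacentAngle} makes the two frontiers touch at such a hub. The cost is dominated by scanning the neighborhood of one inner vertex, whose degree is $\Theta(n^{1/(2\alpha)})$ for the chosen $\rho$, plus possibly the full neighborhood of the meeting hub, which accounts for the $\mathcal{\tilde O}(n^{1/(2\alpha)} + \delta_{\max})$ term. To turn this per-direction accounting into a bound for the actual algorithm, I would invoke the result of Section~\ref{sec:bidir-search-altern}: the explicit balanced schedule above is one admissible alternation strategy, and the greedy strategy is within a constant factor of the best such strategy, so the algorithm inherits the bound. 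Summing the two phases yields the claimed running time, and the a.s.\ bound $\delta_{\max} = \mathcal{\tilde O}(n^{1/(2\alpha)})$ shows the last two terms coincide in order.

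The \textbf{main obstacle} is the first phase: bounding the total degree of the explored part of $\outerBand$. The difficulty is that I must simultaneously control how far the exploration spreads \emph{angularly} and how much degree it accumulates, and prove this for every pair $s,t$ from the structural properties fixed in advance rather than from fresh randomness. This forces the sharp concentration statements for sums of degrees that are deferred to Section~\ref{sec:conc-bounds-sum-deg}, and it is the tension between the angular spread and the radius-weighted degree sum --- not the short, hub-mediated second phase --- that ultimately pins down the exponent $2 - 1/\alpha$.
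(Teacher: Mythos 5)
Your overall architecture matches the paper's (inner disk / outer band, two phases per search, greedy alternation absorbing the ignorance of coordinates), but your choice of the threshold radius $\rho = (2-1/\alpha)\log n$ breaks the argument at exactly the step you identify as the main obstacle. The first-phase angular-locality argument requires that two vertices in the outer band are adjacent only if their angular distance is tiny, i.e., that $\theta(\rho,\rho)=2e^{(R-2\rho)/2}(1+o(1))$ is $o(1)$; this needs $2\rho>R$, i.e., $\rho>\log n + C/2$. With your $\rho$ we have $2\rho=(4-2/\alpha)\log n<2\log n+C=R$ for every $\alpha<1$, so two vertices at radius $\rho$ (indeed, any two vertices whose radii sum to less than $R$) are adjacent \emph{regardless} of their angular coordinates, and $e^{(R-2\rho)/2}=\Theta(n^{1/\alpha-1})$ is polynomially large rather than $n^{1-1/\alpha}=o(1)$. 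Your outer band then contains the entire densely interconnected shell of radii between $(2-1/\alpha)\log n$ and roughly $\log n$, through which a single BFS step wraps around the whole disk; the exploration is not confined to a narrow sector, the first phase is not angularly local, and the sum of degrees it accumulates is not $\mathcal{\tilde O}(n^{2-1/\alpha})$ (already the shell of radii in $[(2-1/\alpha)\log n,\tfrac{1}{\alpha}\log n]$ contributes degree mass of order $n^{3-2\alpha-1/(2\alpha)}$, which exceeds every term of the claimed bound for $\alpha$ in the middle of the range). The paper instead takes $\rho=\tfrac{1}{\alpha}(\log n-\log\log n)>\log n$, the \emph{largest} radius for which the inner disk still has the common-neighbor property (witnessed by a vertex of radius at most $R-\rho$, which exists w.h.p.); this simultaneously makes $\theta(\rho,\rho)=\mathcal{\tilde O}(n^{1-1/\alpha})$, which combined with the polylogarithmic diameter confines phase one to a sector of that width and is the actual source of the exponent $2-1/\alpha$. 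Your heuristic of placing $\rho$ where degrees reach the $\delta_{\max}$ scale points in the wrong direction: you want the inner disk as large as possible, not as small as the hub scale.

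Two smaller inaccuracies: Theorem~\ref{thm:alternation-strategy-does-not-matter} gives a factor of the diameter $d$ (polylogarithmic here), not a constant factor, over the best alternation strategy; and the second phase is not "scanning the neighborhood of one inner vertex" --- an exploration step charges the sum of degrees over the \emph{entire} frontier and then over all of its neighbors, which is why the paper covers $N(V_1)\setminus S$ by the neighborhoods of the two hypothetical corner vertices $c_1,c_2$ at radius $\rho$ (Lemma~\ref{lem:boundaryCover}) and proves the $\mathcal{\tilde O}(n^{1/(2\alpha)})$ bound for the sum of degrees of \emph{their} neighbors (Lemma~\ref{lem:second-phase-concentration}), uniformly over all angular positions. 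The coincidence that a single vertex at your $\rho$ has degree $\Theta(n^{1/(2\alpha)})$ does not substitute for this accounting.
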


We note that this bound on the running time also holds in expectation.
Our bound fails with probability $\mathcal{O}(1/n)$, in which case the
worst-case running time is still bounded by the size of the hyperbolic
random graph, which is $\mathcal{O}(n)$.  Consequently, this case
contributes $\mathcal{O}(1)$ to the expectation, which is dominated by
the above bound.

To prove Theorem~\ref{thm:main-theorem}, we make use of the hyperbolic
geometry in the following way; see Figure~\ref{fig:twoPhases}.  As
long as the two searches visit only low-degree vertices, all explored
vertices lie within a small region, i.e., the searches operate
locally.  Once the searches visit high-degree vertices closer to the
center of the hyperbolic disk (green area in
Figure~\ref{fig:twoPhases}), it takes only few steps to complete the
search, as hyperbolic random graphs have a densely connected core.
Thus, we split our analysis in two phases: a first phase in which both
searches advance towards the center and a second phase in which both
searches meet in the center.  Note that this strategy assumes that we
know the coordinates of the vertices as we would like to stop a search
once it reached the center.  To resolve this issue, we first show in
Section~\ref{sec:bidir-search-altern} that there exists an alternation
strategy that is oblivious to the geometry but performs not much worse
than any other alternation strategy.  We note that this result is
independent of hyperbolic random graphs and thus interesting in its
own right.  Afterwards, in Section~\ref{sec:bidirectional-euclidean}
we examine the performance of the bidirectional BFS on Euclidean
random graphs, before focusing on hyperbolic random graphs in
Section~\ref{sec:bidir-search-hyperb}.

\subsection{Bidirectional Search and Alternation Strategies}
\label{sec:bidir-search-altern}

In an unweighted and undirected graph $G = (V, E)$, a BFS finds the
shortest path between two vertices $s, t \in V$ by starting at $s$ and
exploring the graph in levels, where the $i$th level $L^s_i$ contains
the vertices with distance $i$ to $s$.  More formally, the BFS starts
with the set $L^s_0 = \{s\}$ on level $0$.  Assuming the levels
$L^s_0, \dots, L^s_i$ have been computed already, one obtains the next
level $L^s_{i+1}$ as the set of neighbors of vertices in level $L^s_i$
that are not contained in earlier levels.  Computing $L^s_{i+1}$ from
$L^s_i$ is called an \emph{exploration step}, obtained by
\emph{exploring the edges} between vertices in $L^s_i$
and~$L^s_{i+1}$.

The bidirectional BFS runs two BFSs simultaneously.  The \emph{forward
  search} starts at $s$ and the \emph{backward search} starts at~$t$.
The shortest path between the two vertices can then be obtained, once
the search spaces of the forward and backward search touch.  Since the
two searches cannot actually be run simultaneously, they alternate
depending on their progress.  When exactly the two searches alternate
is determined by the \emph{alternation strategy}.  Note that we only
swap after full exploration steps, i.e., we never explore only half of
level $i$ of one search before continuing with the other.  This has
the advantage that we can be certain to know the shortest path once a
vertex is found by both searches.

In the following we define the \emph{greedy alternation strategy} as
introduced by Borassi and Natale~\cite{bn-kaabra-16} and show that it
is not much worse than any other alternation strategy. Assume the
latest levels of the forward and backward searches are $L^s_i$ and
$L^t_j$, respectively. Then the next exploration step of the forward
search would cost time proportional to $c^s_i \coloneqq \sum_{v\in
  L^s_i} \deg(v)$, while the cost for the backward search is $c^t_j
\coloneqq \sum_{v\in L^t_j} \deg(v)$. The \emph{greedy alternation
  strategy} then greedily continues with the search that causes the
fewer cost in the next exploration step, i.e., it continues with the
forward search if $c^s_i \le c^t_j$ and with the backward search
otherwise.

\begin{theorem}
  \label{thm:alternation-strategy-does-not-matter}
  Let $G$ be a graph with diameter $d$.  If there exists an
  alternation strategy such that the bidirectional BFS explores $f(n)$
  edges, then the bidirectional BFS with greedy alternation strategy
  explores at most $d\cdot f(n)$ edges.
\end{theorem}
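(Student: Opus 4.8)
The plan is to reduce every alternation strategy to a monotone lattice path and to exploit that the total exploration cost is order-independent. First I would observe that the state of the bidirectional search is captured entirely by a pair $(i,j)$, where $i$ is the number of completed forward exploration steps and $j$ the number of completed backward steps: after these steps the forward search has discovered exactly the vertices within distance $i$ of $s$, and the backward search those within distance $j$ of $t$. Hence a vertex is found by both searches if and only if $i+j \ge \dist(s,t)$. Since each exploration step increases $i+j$ by exactly one, starting from $(0,0)$, every alternation strategy — greedy included — terminates after precisely $D := \dist(s,t)$ steps, ending in some state $(i,j)$ with $i+j = D$. Because $D \le d$, this already fixes the number of greedy steps (and the degenerate case $s=t$, with $D=0$, is trivial).

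Next I would record that the number of explored edges depends only on the terminal state and not on the interleaving: a strategy ending at $(i,j)$ explores $\sum_{k=0}^{i-1} c^s_k + \sum_{k=0}^{j-1} c^t_k$ edges, since the set of explored edges is determined by the two level counts alone. Let $(i', j')$, with $i' + j' = D$, be the terminal state of the hypothesized strategy that explores $f(n)$ edges, so that $\sum_{k=0}^{i'-1} c^s_k + \sum_{k=0}^{j'-1} c^t_k = f(n)$. As all costs $c^s_k, c^t_k$ are non-negative, this gives $\sum_{k=0}^{i'-1} c^s_k \le f(n)$ and $\sum_{k=0}^{j'-1} c^t_k \le f(n)$.

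The heart of the argument is then a single per-step bound: every exploration step taken by the greedy strategy costs at most $f(n)$. Consider a greedy step taken in a state $(i,j)$ with $i + j < D$. Since $i + j < D = i' + j'$, at least one of $i < i'$ or $j < j'$ must hold. Suppose greedy advances the forward search, at cost $c^s_i$. If $i < i'$, then $c^s_i$ is one summand of $\sum_{k=0}^{i'-1} c^s_k \le f(n)$, so $c^s_i \le f(n)$. Otherwise $i \ge i'$ forces $j < j'$, and the greedy rule guarantees $c^s_i \le c^t_j$; since $c^t_j$ is a summand of $\sum_{k=0}^{j'-1} c^t_k \le f(n)$, again $c^s_i \le f(n)$. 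The backward case is symmetric. Summing this bound over the $D \le d$ greedy steps yields at most $d \cdot f(n)$ explored edges.

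I expect the main obstacle to be the first part, namely pinning down the exact termination characterization $i+j = D$ and justifying that both the hypothesized strategy and the greedy strategy make the same number $D$ of steps ending on the line $i+j=D$; once this combinatorial skeleton is fixed, the competitive per-step comparison is short and relies only on non-negativity of the level costs and the defining inequality of the greedy rule. I would also take care to state the order-independence of the total cost explicitly, as it is what lets me compare individual greedy steps against partial sums of the hypothesized strategy rather than against its full interleaved execution.
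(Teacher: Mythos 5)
Your proof is correct and follows essentially the same route as the paper's: both rest on the order-independence of the total cost (it depends only on how many forward and backward layers are explored) and on bounding each greedy step by $f(n)$ via the greedy rule, comparing against the reference strategy's partial sums, before multiplying by the number of steps, which is at most $d$. Your version is organized a bit more carefully (explicit termination at $i+j=\dist(s,t)$ and a symmetric case analysis instead of the paper's without-loss-of-generality reduction), but the underlying argument is the same.
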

\begin{proof}
  Let $A$ be the alternation strategy that explores only $f(n)$ edges.
  First note that the number of explored edges only depends on the
  number of levels explored by the two different searches and not on
  the actual order in which they are explored. Thus, if the greedy
  alternation strategy is different from $A$, we can assume without
  loss of generality that the greedy strategy performed more
  exploration steps in the forward search and fewer in the backward
  search compared to $A$.  Let $c^s$ and $c^t$ be the number of edges
  explored by the forward and backward search, respectively, when
  using the greedy strategy.  Moreover, let $j$ be the last level of
  the backward search (which is actually not explored) and,
  accordingly, let $c^t_j$ be the number of edges the next step in the
  backward search would have explored.  Then $c^t + c^t_j \le f(n)$ as,
  when using $A$, the backward search still explores level $j$.
  Moreover, the forward search with the greedy strategy explores at
  most $c^t + c^t_j$ (and therefore at most~$f(n)$) edges in each
  step, as exploring the backward search would be cheaper otherwise.
  Consequently, each step in the forward and backward search costs at
  most $f(n)$.  As there are at most $d$ steps in total, we obtain the
  claimed bound.
\end{proof}

\subsection{Bidirectional Search in Euclidean Random Graphs}
\label{sec:bidirectional-euclidean}

Euclidean random graphs, commonly known as \emph{random geometric
  graphs}, are generated by distributing $n$ vertices uniformly at
random in the unit square $[0, 1]^2$ and connecting any two vertices
if the Euclidean distance between them is at most some threshold $R
\in \mathbb{R}$~\cite{p-rgg-03}.  One can imagine, that each vertex is
equipped with a disk of radius $R$ and an edge is added to all other
vertices that lie in this disk.  The threshold $R$ affects the
properties of the generated network and in order to obtain graphs with
a giant component of linear size (as is the case for hyperbolic random
graphs), $R$ has to be chosen from the so called \emph{supercritical
  regime}~\cite{p-rgg-03}.  In contrast to hyperbolic random graphs,
the uniform sampling of the vertices in the Euclidean space leads to a
distribution where the number of vertices falling into each disk is
roughly the same, which in turn leads to a homogeneous degree
distribution.

We examine how a BFS explores such a graph, by considering the region
of the plane containing the vertices visited after several exploration
steps.  For Euclidean random graphs with $R$ chosen from the
supercritical regime it was shown that for two vertices at graph
theoretic distance $d$, it holds that $R \cdot d$ is at most a
constant factor larger than the Euclidean distance between them, if
$d$ is super-logarithmic~\cite{fss-dbtrg-13}.  Additionally, it is
easy to see that the Euclidean distance between them can be at most $R
\cdot d$.  Therefore, we can assume that after $k$ (sufficiently many)
steps the region in the plane that contains the visited vertices
resembles a disk of radius proportional to $k$.  Since the area of a
disk with radius $r$ grows as $\pi r^2$, the expected number of
explored vertices is in $\Theta(n k^2)$ (since the vertices are
distributed uniformly).

In this scenario it is easy to see that the performance of a
bidirectional BFS improves by a constant factor, compared to a
standard BFS.  Let $s$ and $t$ be two vertices with (sufficiently
large) graph theoretic distance $d$ from each other.  Then, the
expected number of vertices explored by a standard BFS from $s$ to $t$
is $\Theta(n d^2)$.  If we run two searches instead (one starting at
$s$, the other at $t$), then the expected explored search space is
minimized when the two BFSs touch after half as many steps, exploring
two disks of half the radius.  (Note that this holds independent of
the chosen alternation strategy.)  In that case the expected number of
explored vertices is proportional to $2 n \pi (d/2)^2$ which is again
$\Theta(n d^2$), indicating that the bidirectional variant yields no
asymptotic speedup over the standard BFS.

In the remainder of this paper we focus on the performance of the
bidirectional BFS on hyperbolic random graphs.  In contrast to
Euclidean random graphs, they feature a heterogeneous degree
distribution, leading to significant differences in the performance of
the bidirectional BFS.

\subsection{Bidirectional Search in Hyperbolic Random Graphs}
\label{sec:bidir-search-hyperb}

To analyze the size of the search space of the bidirectional BFS in
hyperbolic random graphs, we separate the whole disk~$D_R$ into two
parts.  One is the \emph{inner disk} $D_R|_0^\rho$ centered at the
origin.  Its radius $\rho$ is chosen in such a way that any two
vertices in $D_R|_0^\rho$ have a common neighbor with high
probability.  The second part is the \emph{outer band} $D_R|_\rho^R$,
the remainder of the whole disk.  A single BFS now explores the graph
in two phases.  In the first phase, the BFS explores vertices in the
outer band.  The phase ends, when the next vertex to be encountered
lies in the inner disk.  Once both BFSs completed the first phase,
they only need at most two more steps for their search spaces to share
a vertex.  One step to encounter the vertex\footnote{Note that this
  vertex has a degree of $\tilde{\Omega}(n^{1 - 1/(2\alpha)})$ with
  high probability.  Consequently, a non-giant component of size
  $\tilde{\mathcal{O}}(1)$ is detected (at the latest) before
  exploring this vertex (see Section~\ref{sec:preliminaries}). } in
the inner disk and another step to meet at their common neighbor that
any two vertices in the inner disk have with high probability; see
Figure~\ref{fig:twoPhases}.

Note that this scenario describes the worst case.  Depending on the
positions of the two considered vertices the two searches may touch
earlier, e.g., when both vertices are close to each other in the outer
band or when at least one of them is already contained in the inner
disk.  However, since we want to determine an upper bound on the
running time, we consider the case where both vertices lie in the
outer band and the two searches touch in the inner disk.  In the
remainder of the paper we only consider how one of the two searches
explores the graph.  The obtained bounds also hold for the other
search, meaning the total search space increases only by a constant
factor when considering both searches instead of only one.

For our analysis we assume an alternation strategy in which each
search stops once it explored one additional level after finding the
first vertex in the inner disk $D_R|_0^\rho$.  Of course, this cannot
be implemented without knowing the underlying geometry of the network.
However, by Theorem~\ref{thm:alternation-strategy-does-not-matter} the
search space explored using the greedy alternation strategy is only a
poly-logarithmic factor larger, as the diameter of hyperbolic random
graphs is poly-logarithmic with high
probability~\cite{fk-dhrg-18}\footnote{We note that there is a tighter
  bound of $\mathcal{O}(\log(n))$ on the diameter of hyperbolic random
  graphs, which holds asymptotically almost surely~\cite{ms-k-19}.}.
The following lemma shows for which choice of~$\rho$ the above
sketched strategy works.

\begin{lemma}
  \label{lem:innerDiskRadius}
  Let $G$ be a hyperbolic random graph.  With high probability, $G$
  contains a vertex that is adjacent to every other vertex in
  $D_R|_0^\rho$, for $\rho = \frac{1}{\alpha}(\log n - \log\log n)$.
\end{lemma}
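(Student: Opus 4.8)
The plan is to exhibit a single vertex whose radius is so small that it is geometrically forced to be adjacent to every vertex of the inner disk, and then to argue that at least one of the $n$ sampled vertices lands in the corresponding central region with high probability. The deterministic heart of the argument is the claim that any vertex $u$ with radius $r_u \le R - \rho$ is adjacent to every other vertex $w \in D_0^\rho$. Since for fixed radii the hyperbolic distance grows with the angular distance, the distance between $u$ and $w$ is maximized when they are antipodal, i.e.\ at angular distance $\pi$; plugging $\cos(\pi) = -1$ into the distance law gives $\cosh(\dist(u,w)) = \cosh(r_u)\cosh(r_w) + \sinh(r_u)\sinh(r_w) = \cosh(r_u + r_w)$, so the largest possible distance is exactly $r_u + r_w$. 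As $w \in D_0^\rho$ satisfies $r_w \le \rho$ (and almost surely $r_w < \rho$, the boundary circle having measure zero), this maximal distance is at most $(R-\rho)+\rho = R$, and in fact strictly below $R$, so $u$ and $w$ are connected. Equivalently, $r_u + r_w \le R$ is precisely the condition under which the maximum adjacent angle $\theta(r_u,r_w)$ from \eqref{eq:maxAdjacentAngle} reaches $\pi$.

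It then remains to show that with high probability at least one vertex lies in $D_0^{R-\rho}$. Here I would compute the measure of this region: by \eqref{eq:originBallApproximation}, $\mu(D_0^{R-\rho}) = e^{\alpha((R-\rho)-R)}(1+o(1)) = e^{-\alpha\rho}(1+o(1))$, and the choice $\rho = \frac{1}{\alpha}(\log n - \log\log n)$ yields $e^{-\alpha\rho} = \log n / n$. Thus the expected number of vertices in $D_0^{R-\rho}$ is $n\,\mu(D_0^{R-\rho}) = \log n\,(1+o(1))$. The probability that none of the $n$ independently placed vertices falls into $D_0^{R-\rho}$ is $(1 - \mu(D_0^{R-\rho}))^n \le \exp(-n\,\mu(D_0^{R-\rho}))$; combining the resulting existence of a vertex in $D_0^{R-\rho}$ with the domination claim above proves the lemma.

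The step I expect to be delicate is precisely this last probability estimate, because $n\,\mu(D_0^{R-\rho})$ sits right at $\log n$: a crude $(1+o(1))$ multiplicative error on the measure is not enough, since a negative relative error of order $1/\log\log n$ would already inflate $\exp(-n\,\mu(D_0^{R-\rho}))$ far beyond $1/n$. To obtain a genuine $O(1/n)$ bound I would therefore work with the exact expression $\mu(D_0^r) = (\cosh(\alpha r) - 1)/(\cosh(\alpha R) - 1)$, from which $n\,\mu(D_0^{R-\rho}) = \log n - O(n^{1-2\alpha})$; since $\alpha > 1/2$ the correction is only $o(1)$ additively, so $\exp(-n\,\mu(D_0^{R-\rho})) \le e^{-\log n + o(1)} = O(1/n)$, which is exactly the high-probability guarantee required.
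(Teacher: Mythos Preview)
Your proof is correct and follows essentially the same approach as the paper: show that any vertex in $D_0^{R-\rho}$ is adjacent to all of $D_0^\rho$ via the bound $\dist(u,w) \le r_u + r_w$, then bound the probability that $D_0^{R-\rho}$ is empty using $\mu(D_0^{R-\rho}) = e^{-\alpha\rho}(1+o(1)) = \frac{\log n}{n}(1+o(1))$. Your additional care in controlling the $o(1)$ term via the exact measure is a valid refinement that the paper glosses over---the paper simply writes $O\big((1-\tfrac{\log n}{n})^n\big)=O(1/n)$ without checking that the multiplicative $o(1)$ is small enough, whereas you correctly observe that one needs $n\mu(D_0^{R-\rho}) = \log n - o(1)$ additively, which indeed holds since the error is $O(n^{1-2\alpha})$.
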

\begin{proof}
  Assume $v$ is a vertex with radius at most $R - \rho$.  Note that
  the distance between two points is upper bounded by the sum of their
  radii.  Thus, every vertex in $D_R|_0^\rho$ has distance at most $R$
  to $v$, and is therefore adjacent to $v$.  Hence, to prove the
  claim, it suffices to show the existence of this vertex $v$ with
  radius at most $R - \rho$.  As described in
  Section~\ref{sec:preliminaries}, the probability for a single vertex
  to have radius at most $R - \rho$ is given by the
  measure~$\mu(D_R|_0^{R - \rho})$.  Using
  Equation~\eqref{eq:originBallApproximation} we obtain
  \begin{align*}
    \mu(D_R|_0^{R - \rho}) &= e^{-\alpha \rho}(1 + o(1)) = \frac{\log n}{n}(1 + o(1)).
  \end{align*}
  Thus, the probability that none of the $n$ vertices lies in
  $D_R|_0^{R-\rho}$ is given by $(1 - \mu(D_R|_0^{R - \rho}))^n$.
  That is,
  \begin{align*}
    \Pr[\{v \in D_R|_0^{R - \rho}\} = \emptyset] = \left(1 - \frac{\log n}{n} (1 + o(1)) \right)^n.
  \end{align*}
  Since $(1 - x) \le e^{-x}$ for all $x \in \mathbb{R}$, this term can
  be bounded by
  \begin{align*}
    \Pr[\{v \in D_R|_0^{R - \rho}\} = \emptyset ] &\le e^{-\frac{\log(n)}{n}(1 + o(1)) \cdot n} = e^{-\log(n)(1 + o(1))} = n^{-(1 + o(1))} = \mathcal{O}(1/n).
  \end{align*}
  Hence, there is at least one vertex with radius at most $R - \rho$
  with high probability.
\end{proof}

In the following, we first bound the search space explored in the
first phase, i.e., before we enter the inner disk $D_R|_0^\rho$.
Afterwards we bound the search space explored in the second phase,
which consists of two exploration steps.  The first one to enter
$D_R|_0^\rho$ and the second one to find a common neighbor, which
exists due to Lemma~\ref{lem:innerDiskRadius}.

\subsubsection{Search Space in the First Phase} 
\label{sec:search-space-first}

To bound the size of the search space in the outer band, we make use
of the geometry in the following way.  For two vertices in the outer
band to be adjacent, their angular distance has to be small.
Moreover, the number of exploration steps is bounded by the diameter
of the graph.  Thus, the maximum angular distance between vertices
visited in the first phase cannot be too large.  Note that the
following lemma restricts the search to a sublinear portion of the
disk, which we later use to show that also the number of explored
edges is sublinear.

\begin{figure}
  \centering
  \includegraphics{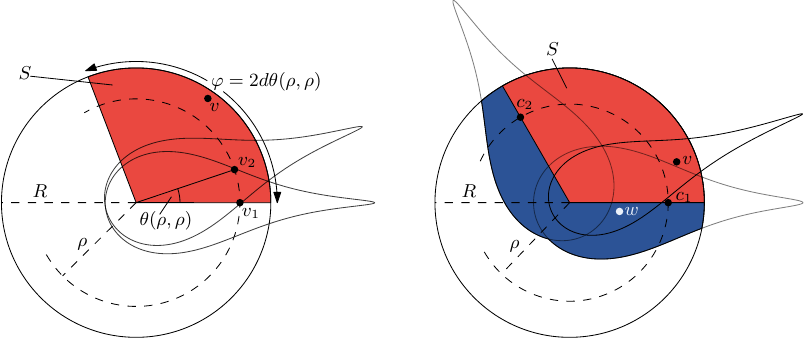}
  \caption{Left: The sector $S$ (red) of angular width $\varphi$
    contains the search space of a BFS starting at $v$, in the outer
    band~$D_R|_\rho^R$.  The vertices $v_1$ and $v_2$ are at maximum
    angular distance to still be adjacent.  Right: Neighbor $w$ of
    vertex $v$ is in $S$ (red) or a neighbor of $c_1$ or $c_2$
    (blue).}
  \label{fig:outerBand}
\end{figure}

\begin{lemma}
  \label{lem:angularInterval}
  With high probability, all vertices that a BFS on a hyperbolic
  random graph explores before finding a vertex with radius at most
  $\rho = \frac{1}{\alpha}(\log n - \log\log n)$ lie within a sector
  of angular width $\mathcal{\tilde O}(n^{-(1/\alpha-1)})$.
\end{lemma}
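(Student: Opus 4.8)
The plan is to bound the angular width of the search space by controlling how far, in angular terms, the BFS can spread during the first phase. Two facts combine to make this work. First, by Equation~\eqref{eq:maxAdjacentAngle}, two vertices with radii $r_1, r_2 \ge \rho$ can be adjacent only if their angular distance is at most $\theta(r_1, r_2) = 2e^{(R - r_1 - r_2)/2}(1 + \Theta(e^{R - r_1 - r_2}))$; since both radii are at least $\rho$, this is maximized when $r_1 = r_2 = \rho$, giving a per-edge angular reach of at most $\theta(\rho, \rho) = \Theta(e^{(R - 2\rho)/2})$. Plugging in $R = 2\log n + C$ and $\rho = \frac{1}{\alpha}(\log n - \log\log n)$ shows $e^{(R-2\rho)/2} = \Theta(n^{1 - 1/\alpha} \log^{1/\alpha} n)$, i.e.\ each exploration step can increase the angular extent by at most $\mathcal{\tilde O}(n^{1-1/\alpha})$ on either side.

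Second, the number of exploration steps in the first phase is bounded by the diameter of the graph, which is poly-logarithmic with high probability~\cite{DBLP:conf/icalp/FriedrichK15}. So the total angular spread is at most the product of the per-step angular reach and the number of steps. The key steps, in order, are: (1) fix the starting vertex $v$ and note that after $k$ exploration steps every explored vertex lies within angular distance $k \cdot \theta(\rho,\rho)$ of $v$, since each hop crosses angular distance at most $\theta(\rho,\rho)$ and the searched vertices all have radius at least $\rho$; (2) substitute the values of $R$ and $\rho$ to evaluate $\theta(\rho, \rho) = \mathcal{\tilde O}(n^{1 - 1/\alpha})$; (3) invoke the poly-logarithmic diameter bound to cap $k$; and (4) conclude that the whole search space sits in a sector of angular width $k \cdot 2\theta(\rho,\rho) = \mathcal{\tilde O}(n^{1 - 1/\alpha})$, absorbing the poly-logarithmic factor from $k$ and the $\log^{1/\alpha} n$ factor into the $\mathcal{\tilde O}$.

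The subtle point requiring care is the claim that \emph{every} vertex explored before entering the inner disk has radius at least $\rho$, so that the bound $\theta(\rho, \rho)$ is legitimately the maximum per-hop angular reach. By the definition of the first phase, the search stops as soon as it would encounter a vertex of radius at most $\rho$; hence all vertices whose neighborhoods are actually explored lie in the outer band $D_\rho^R$, and $\theta$ is monotonically decreasing in both radii, so $\theta(\rho,\rho)$ dominates $\theta(r_1, r_2)$ for all such pairs. I expect the main obstacle to be handling the boundary between the two phases precisely: the lemma concerns vertices explored \emph{before} finding one with radius at most $\rho$, but the very vertex that triggers the transition is reached by an edge from an outer-band vertex, and one must confirm that this last edge also obeys the $\theta(\rho, \rho)$ angular bound (it does, since its outer-band endpoint has radius at least $\rho$ and the inner endpoint has radius at most $\rho < $ its partner's radius only in the radial sense, but the angular reach formula still applies with both radii at least the relevant lower bounds). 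Keeping the accounting clean here, and making sure the ``with high probability'' qualifier is carried consistently from the diameter bound through to the final sector-width statement, is where the argument must be stated carefully rather than where any deep difficulty lies.
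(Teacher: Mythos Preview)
Your proposal is correct and follows essentially the same approach as the paper: bound the per-step angular reach by $\theta(\rho,\rho)$ using monotonicity of $\theta$ in both radii, compute $\theta(\rho,\rho)=\mathcal{\tilde O}(n^{1-1/\alpha})$ from $R=2\log n + C$ and the given $\rho$, and multiply by the poly-logarithmic diameter bound. Your extended worry about the boundary edge is unnecessary here, since the lemma only speaks about vertices explored \emph{before} a vertex of radius at most $\rho$ is found, so every vertex in scope already lies in $D_\rho^R$.
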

\begin{proof}
  For an illustration of the proof see
  Figure~\ref{fig:outerBand}~(left).  Recall from
  Section~\ref{sec:preliminaries} that $\theta(r_1, r_2)$ denotes the
  maximum angular distance between two vertices of radii $r_1$ and
  $r_2$ such that they are still adjacent.  Since $r_1$ and $r_2$ only
  appear as negative exponents in the expression for $\theta(r_1,
  r_2)$ (see Equation~\eqref{eq:maxAdjacentAngle}), this angle
  increases with decreasing radii.  Thus, $\theta(r_1, r_2) \le
  \theta(\rho, \rho)$ holds for all vertices in the outer band
  $D_R|_\rho^R$.

  Now assume we start a BFS at a vertex $v \in D_R|_\rho^R$ and
  perform $d$ exploration steps without leaving the outer
  band~$D_R|_\rho^R$.  Then no explored vertex has angular distance
  more than $d\theta(\rho, \rho)$ from $v$.  Thus, the whole search
  space lies within a disk sector of angular width $2d\theta(\rho,
  \rho)$.  The number of steps $d$ is at most poly-logarithmic as the
  diameter of a hyperbolic random graph is poly-logarithmic with high
  probability~\cite{fk-dhrg-18}.  Using
  Equation~\eqref{eq:maxAdjacentAngle} for $\theta(\rho,\rho)$, we
  obtain
  \begin{align*}
    \theta(\rho, \rho) &= 2e^{\frac{R - 2\rho}{2}}(1 + \Theta(e^{R - 2\rho})) \\
                       &= 2e^{C/2}n^{1-1/\alpha}\log(n)^{1/\alpha}(1 + \Theta((\log n/n^{1-\alpha})^{2/\alpha})) \\
                       &= \mathcal{O}(n^{-(1/\alpha-1)}\log(n)^{1/\alpha}),
  \end{align*}
  which proves the claimed bound.
\end{proof}

Note that the expected number of vertices in a sector $S$ of angular
width $\varphi$ is linear in~$n\varphi$ due to the fact that the
angular coordinate of each vertex is chosen uniformly at random.
Thus, Lemma~\ref{lem:angularInterval} already shows that the expected
number of vertices visited in the first phase of the BFS is $\mathcal
{\tilde O}(n^{2-1/\alpha})$, which is sublinear in $n$.  It is also
not hard to see that this bound holds with high probability (see
Corollary~\ref{col:chernoff-bound}).  To also bound the number of
explored edges, we sum the degrees of vertices in $S$.  It is not
surprising that this yields the same asymptotic bound in expectation,
as the expected average degree in a hyperbolic random graph is
constant.  However, to obtain meaningful results, we need a bound that
holds with high probability.  Though we can use techniques similar to
those that have been used to show that the average degree of the whole
graph is constant with high
probability~\cite{bkl-adgcs-16,gpp-rhg-12}, the situation is
complicated by the restriction to a sublinear portion of the disk.
Nonetheless, we obtain the following theorem.

\wormhole{main-theorem-first-phase}
\newcommand{\mainTheoremFirstPhaseText}{Let $G$ be a hyperbolic random
  graph.  The degrees of vertices in every sector of angular width
  $\varphi$ sum to $\mathcal{\tilde O}(\varphi n + n^{1/(2\alpha)} +
  \delta_{\max})$ with high probability if $\varphi =
  \Omega(\log(n)^{2} / n^{1/2}).$}
\begin{theorem}
  \label{thm:main-theorem-first-phase}
  \mainTheoremFirstPhaseText
\end{theorem}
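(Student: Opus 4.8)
The plan is to first reduce the ``every sector'' statement to a bound for each member of a fixed, finite family of sectors, and then to bound the degree sum of a single sector by peeling off the few high-degree vertices near the center and treating the remaining low-degree vertices band by band. Concretely, I would fix a partition of $D_0^R$ into $\lceil 2\pi/\varphi\rceil$ disjoint tiles, each a sector of angular width $\varphi$. Any sector of angular width $\varphi$ meets at most two consecutive tiles, so its degree sum is at most the sum of the degree sums of those two tiles. Hence it suffices to prove the bound $\mathcal{\tilde O}(\varphi n + \delta_{\max})$ for each tile and to take a union bound over the $\mathcal O(1/\varphi) = \mathcal O(n)$ tiles (using $\varphi = \Omega(n^{1-1/\alpha}\log n)$). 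Since Corollary~\ref{cor:chernoff-bound} lets us drive the per-tile failure probability to $\mathcal O(n^{-c})$ for any constant $c$, this union bound still leaves failure probability $\mathcal O(1/n)$.

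For a single fixed sector $S$ of width $\varphi$ I would first pin down the target scale. Writing $\sum_{v\in S}\deg(v)=\sum_{v\in S}\sum_w \mathbb 1[v\sim w]$ and using that a vertex at radius $r$ has expected degree $n\,\mu(D_0^R(r,\varphi))=\Theta(ne^{-r/2})$ by Equation~\eqref{eq:expectedDegree}, while the expected number of vertices of $S$ at radius $r$ is governed by the density~\eqref{eq:probDensity}, the expected degree sum is $\Theta\!\big(n^2\varphi\int_0^R e^{\alpha(r-R)}e^{-r/2}\,\dif r\big)=\Theta(\varphi n)$; the integral is dominated by its contribution near $r=R$ precisely because $\alpha>1/2$, and with $R=2\log n+C$ one gets $e^{-R/2}=\Theta(1/n)$. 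The same integration gives the expected contribution of any radial band, which is what I would later sum.

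Next I would split the vertices of $S$ at a radius threshold $r^\ast$ chosen so that $\mathbb E[\,|S\cap D_0^{r^\ast}|\,]=\Theta(\log n)$, that is $r^\ast = R + \tfrac1\alpha\log\!\big(\tfrac{\log n}{\varphi n}\big)$, which is below $R$ because $\varphi n\gg\log n$. By Corollary~\ref{cor:chernoff-bound}, $S$ contains only $\mathcal{\tilde O}(1)$ vertices of radius at most $r^\ast$ with high probability, so these high-degree vertices contribute at most $\mathcal{\tilde O}(1)\cdot\delta_{\max}=\mathcal{\tilde O}(\delta_{\max})$ to the degree sum; this is the source of the $\delta_{\max}$ term. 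For the remaining vertices (radius $>r^\ast$) I would partition $[r^\ast,R]$ into $\mathcal O(\log n)$ bands of constant width. Lemma~\ref{lem:smallerRadiusIncreasesNeighborhood} makes the innermost radius of a band govern the largest degree occurring in it, so (up to a constant factor per band) each band's degree-sum contribution is controlled by its expectation, and summing the $\mathcal O(\log n)$ band contributions reproduces the integral above and yields $\mathcal{\tilde O}(\varphi n)$.

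The hard part is the concentration for the outer bands. Near the boundary a single vertex has expected degree only $\Theta(1)$, so a per-vertex Chernoff bound fails and one cannot bound individual degrees; one must instead bound the \emph{aggregate} number of adjacent ordered pairs whose $S$-endpoint lies in the band. A naive bounded-differences argument is too weak here, since changing one vertex perturbs the degree sum by $\Theta(\delta_{\max})$ and the resulting deviation $\delta_{\max}\sqrt n$ exceeds $\varphi n$ for $\alpha<1$. What saves the estimate is locality: by Equation~\eqref{eq:maxAdjacentAngle} every non-central vertex of $S$ has all its neighbors within angular distance $\theta(\rho,\rho)=\mathcal{\tilde O}(n^{1-1/\alpha})$ of $S$, so the outer-band degree sum depends only on vertices in a slightly widened sector of width $\Theta(\varphi)$, keeping the relevant expectation at $\Theta(\varphi n)$. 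The remaining obstacle is that the standard whole-disk edge-counting arguments of~\cite{DBLP:conf/icalp/GugelmannPP12,bkl-adgcs-16} must be adapted to a sublinear sector; these aggregate band-wise concentration bounds are exactly the statements I defer to Section~\ref{sec:conc-bounds-sum-deg}, and applying them band by band completes the proof.
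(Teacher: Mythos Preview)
Your reduction to a finite family of sectors and your handling of the inner region $S_0^{r^\ast}$ are correct and match the paper: the paper covers every sector by $n$ sectors of width $2\varphi$ and takes a union bound, and it also peels off the poly-logarithmically many vertices below $\rho_1=r^\ast$ to produce the $\mathcal{\tilde O}(\delta_{\max})$ term (Lemma~\ref{lem:number_edges_in_inner}). The main structural difference is that the paper uses a \emph{three}-part split, inserting a second threshold $\rho_2=\tfrac{1}{\alpha}\log n$. On the central part $S_{\rho_1}^{\rho_2}$ both the degrees and the number of vertices of each degree are $\Omega(\log n)$, so a degree-by-degree count (Lemma~\ref{lem:sumOfDegreesCentralPart}) suffices. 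On the outer part $S_{\rho_2}^R$ the paper does \emph{not} proceed band by band; it applies a method of average bounded differences (Theorem~\ref{thm:averageBoundedDifferences}) to the entire degree sum, after excluding the bad events that some vertex in $S_{\rho_2}^R$ has degree exceeding $c_1 k_{\rho_2}=\Theta(n^{1-1/(2\alpha)})$ or that $|S_{\rho_2}^R|>c_2\varphi n$. Under these exclusions, changing one vertex alters the sum by at most $2c_1 k_{\rho_2}$ and only $c_2\varphi n$ variables matter, giving a tail bound $\exp\bigl(-\Theta(\varphi/n^{1-1/\alpha})\bigr)$; this is precisely where the hypothesis $\varphi=\Omega(n^{1-1/\alpha}\log n)$ is consumed.

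Your proposal has a gap at the outermost bands, and the fix you suggest is not the right one. You claim ``a per-vertex Chernoff bound fails'', but Corollary~\ref{cor:chernoff-bound} still gives $\deg(v)=\mathcal O(\log n)$ with probability $1-\mathcal O(n^{-c})$ whenever $\mathbb E[\deg(v)]=\mathcal O(1)$; combined with the whp bound $|S_{\rho_2}^R|=\mathcal O(\varphi n)$ this already yields a contribution of $\mathcal{\tilde O}(\varphi n)$, so your band-by-band scheme can in fact be completed without any aggregate argument. What does \emph{not} work is the ``locality'' save you propose: restricting attention to a widened sector only limits the number of relevant variables to $\Theta(\varphi n)$, but the per-variable effect on the sum is still the maximum degree among the $S$-vertices under consideration. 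Without the second cut at $\rho_2$, that maximum is $(\varphi n/\log n)^{1/(2\alpha)}$ rather than $n^{1-1/(2\alpha)}$, and the resulting bounded-differences tail $\exp\bigl(-\Theta((\varphi n)^{1-1/\alpha})\bigr)$ is useless since $1-1/\alpha<0$. So either carry the band argument through directly (it works, up to the polylog you already allow), or adopt the paper's second threshold $\rho_2$ together with the bounded-differences method; your final deferral to Section~\ref{sec:conc-bounds-sum-deg} is circular as written, since that is where this very theorem is proved.
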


We note that $\delta_{\max}$ has to be included here, as the theorem
states a bound for every sector, and thus in particular for sectors
containing the vertex of maximum degree.  Recall, that $\delta_{\max}
= \mathcal {\tilde O}(n^{1/(2\alpha)})$ holds almost
surely~\cite{gpp-rhg-12}.  Moreover, we note that the condition
$\varphi = \Omega(\log(n)^{2} / n^{1/2})$ is crucial for our proof,
i.e., the angular width of the sector has to be sufficiently large for
the concentration bound to hold.  We note that, depending on $\alpha$,
the angular width determined in Lemma~\ref{lem:angularInterval} may be
smaller than this lower bound.  However, if this is the case, we can
choose $\varphi = \mathcal{\tilde{O}}(n^{-1/2})$ as an upper bound for
the angular width of the sector and obtain
$\mathcal{\tilde{O}}(\varphi n) = \mathcal{\tilde{O}}(n^{1/2}) =
\mathcal{\tilde{O}}(n^{1/(2\alpha)})$ for $\alpha \in (1/2, 1)$.
Consequently, the bound holds for the previously determined angular
width $\mathcal{\tilde{O}}(n^{-(1/\alpha - 1)})$ for all $\alpha \in
(1/2, 1)$.

As the proof for Theorem~\ref{thm:main-theorem-first-phase} is rather
technical, we defer it to Section~\ref{sec:conc-bounds-sum-deg}.
Together with Lemma~\ref{lem:angularInterval}, we obtain the following
corollary.  Note that since $\alpha \in (1/2, 1)$, this shows that the
running time spend in the first phase (not accounting for the maximum
degree) is sublinear in $n$ with high probability.

\begin{corollary}
  \label{cor:firstPhase}
  On a hyperbolic random graph, the first phase of the bidirectional
  search explores with high probability only $\mathcal {\tilde
    O}(n^{2-1/\alpha} + n^{1/(2\alpha)} + \delta_{\max})$ many edges.
\end{corollary}

\subsubsection{Search Space in the Second Phase}
\label{search-space-in-second-phase}

The first phase of the BFS is completed when the next vertex to be
encountered lies in the inner disk. Thus, the second phase consists of
only two exploration steps.  One step to encounter the vertex in the
inner disk and another step to meet the other search.  Thus, to bound
the running time of the second phase, we have to bound the number of
edges explored in these two exploration steps.  To do this, let $V_1$
be the set of vertices encountered in the first phase.  Recall that
all these vertices lie within a sector $S$ of angular width $\varphi =
\mathcal {\tilde O}(n^{-(1/\alpha - 1)})$
(Lemma~\ref{lem:angularInterval}).  The number of explored edges in
the second phase is then bounded by the sum of degrees of all
neighbors $N(V_1)$ of vertices in $V_1$.  To bound this sum, we divide
the neighbors of $V_1$ into two categories: $N(V_1) \cap S$ and
$N(V_1) \setminus S$.  Note that we already bounded the sum of degrees
of vertices in $S$ for the first phase (see
Theorem~\ref{thm:main-theorem-first-phase}), which clearly also bounds
this sum for $N(V_1) \cap S$.  Thus, it remains to bound the sum of
degrees of vertices in $N(V_1) \setminus S$.

To bound this sum, we introduce two \emph{hypothetical vertices}
(i.e., vertices with specific positions that are not actually part of
the graph) $c_1$ and $c_2$ such that every vertex in
$N(V_1) \setminus S$ is a neighbor of $c_1$ or $c_2$.  Then it remains
to bound the sum of degrees of neighbors of these two vertices.  To
define $c_1$ and $c_2$, recall that the first phase was restricted to
points in the sector $S$ that have a radius greater than $\rho$, i.e.,
all vertices in $V_1$ lie within $S|_\rho^R$.  The hypothetical
vertices $c_1$ and $c_2$ are basically positioned at the corners of
this region, i.e., they both have radius $\rho$, and they assume the
maximum and minimum angular coordinate within~$S$, respectively.
Figure~\ref{fig:outerBand} (right) shows these positions.  We obtain
the following.
\begin{lemma}
  \label{lem:boundaryCover}
  Let $G$ be a hyperbolic random graph, let $S$ be a sector, and let
  $v \in S|_\rho^R$ be a vertex.  Then, every neighbor of $v$ lies in
  $S$ or is a neighbor of one of the hypothetical vertices~$c_1$ or
  $c_2$.
\end{lemma}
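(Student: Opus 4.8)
The plan is to show that replacing $v$ by the appropriate corner vertex can only shrink the distance to an out-of-sector neighbor, so that this neighbor stays adjacent. Write $S$ for the sector with angular interval $[\varphi_1, \varphi_2]$, so that $c_1 = (\rho, \varphi_2)$ and $c_2 = (\rho, \varphi_1)$, and let $v = (r_v, \varphi_v)$ with $\rho \le r_v \le R$ and $\varphi_1 \le \varphi_v \le \varphi_2$. Let $w = (r_w, \varphi_w)$ be a neighbor of $v$. If $w \in S$ there is nothing to prove, so I assume $\varphi_w \notin [\varphi_1, \varphi_2]$. By symmetry it suffices to treat the case $\varphi_w > \varphi_2$ (otherwise $\varphi_w < \varphi_1$ and the identical argument applies with $c_2$ in place of $c_1$), and I claim that then $w$ is a neighbor of $c_1$. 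Since the sector is narrow (its width is $\mathcal{\tilde O}(n^{1-1/\alpha}) = o(1)$) and $w$ must be angularly close to $v$ in order to be adjacent, I may choose the reference ray so that all relevant angular coordinates avoid the wraparound and angular distances coincide with plain coordinate differences.

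First I would move $v$ angularly onto the boundary of the sector without changing its radius. Because $\varphi_v \le \varphi_2 < \varphi_w$, the angle $\varphi_2$ lies between $\varphi_v$ and $\varphi_w$, so the angular distance from $w$ to $\varphi_2$ is no larger than that from $w$ to $\varphi_v$. The hyperbolic distance formula shows that, for two fixed radii, $\cosh(\dist)$ (and hence $\dist$) is increasing in the angular distance $\Delta \in [0, \pi]$, since decreasing $\cos(\Delta)$ increases the right-hand side. Applying this with the common radius $r_v$ to the auxiliary point $v' = (r_v, \varphi_2)$ yields $\dist(v', w) \le \dist(v, w) \le R$, so $w$ is adjacent to $v'$.

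Next I would decrease the radius of $v'$ down to $\rho$ while keeping the angle $\varphi_2$ fixed, arriving at $c_1 = (\rho, \varphi_2)$. Since $c_1$ and $v'$ share the same angular coordinate and $\rho \le r_v$, Lemma~\ref{lem:smallerRadiusIncreasesNeighborhood} guarantees that every point adjacent to $v'$ is also adjacent to $c_1$; its proof is purely geometric, so it applies verbatim to the hypothetical point $v'$ as well. In particular $w$ is a neighbor of $c_1$, which completes the argument.

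The two ingredients are both monotonicity statements: the distance grows with angular distance at fixed radii, and the neighborhood grows as the radius shrinks at fixed angle. Neither is hard to establish, and together they express the geometric intuition that $c_1$ (respectively $c_2$) \emph{dominates} $v$ with respect to any out-of-sector neighbor, being both closer to the origin and angularly interposed between $v$ and $w$. The only point requiring genuine care is the bookkeeping of angles modulo $2\pi$: I must ensure that placing $\varphi_2$ "between" $\varphi_v$ and $\varphi_w$ really does not increase the angular distance, which is precisely where the choice of a reference ray avoiding the narrow sector is used. Everything else is a direct application of the distance formula and Lemma~\ref{lem:smallerRadiusIncreasesNeighborhood}.
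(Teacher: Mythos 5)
Your proof is correct and follows essentially the same route as the paper's: both arguments move $v$ to the corner $c_1$ in two single-coordinate steps, using Lemma~\ref{lem:smallerRadiusIncreasesNeighborhood} for the radial move and the monotonicity of the distance formula in the angular separation for the angular move; you merely perform the two steps in the opposite order (angular first, then radial, whereas the paper first drops the radius to $\rho$ and then slides along the circle to $c_1$). The only cosmetic slip is the remark that $w$ must be angularly close to $v$ --- false when $w$ has small radius --- but this is immaterial since the monotonicity argument works for any angular distance in $[0,\pi]$ and the boundary angle $\varphi_2$ always lies on the shorter arc from $\varphi_v$ to $\varphi_w$ when $w \notin S$.
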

\begin{proof}
  Let $v = (r, \varphi) \in S|_\rho^R$ and $w \in N(v)\setminus S$.
  Without loss of generality, assume that $c_1$ lies between $v$ and
  $w$, as is depicted in Figure~\ref{fig:outerBand} (right). Now
  consider the point $v' = (\rho, \varphi)$ obtained by moving $v$ to
  the same radius as $c_1$.  According to
  Lemma~\ref{lem:smallerRadiusIncreasesNeighborhood} we have $N(v)
  \subseteq N(v')$.  In particular, it holds that $w \in N(v')$ and
  therefore $\dist(v', w) \le R$.  Since $v'$ and $c_1$ have the same
  radial coordinate and $c_1$ is between $v'$ and $w$, it follows that
  $\dist(c_1, w) \le R$.
\end{proof}

By the above argument, it remains to sum the degrees of neighbors of
$c_1$ and $c_2$.  In the following, we show that the degrees of the
neighbors of a vertex with radius $r$ sum
to~$\Theta(ne^{-(\alpha - 1/2)r})$ in expectation.  We note that, for
large values of $r$, i.e., for a vertex lying close to the boundary of
the disk, this term is surprisingly large.  This is due to the fact
that, although vertices near the center of the disk are rather
unlikely to exist in the first place, their degree would be
sufficiently large such that they dominate the expected degree sum.

\begin{lemma}
  \label{lem:expected-neighborhood-degree-sum}
  Let $G$ be a hyperbolic random graph.  The degrees of the neighbors
  of a vertex~$v$ sum to $\Theta(ne^{-(\alpha - 1/2)r(v)})$ in
  expectation.
\end{lemma}
\begin{proof}
  Let $Z_v$ be the sum of the degrees of the neighbors of $v$, which
  is a random variable that depends on the positions of all vertices
  in the graph.  Formally, we can express $Z_v$ by assigning each
  vertex $w \in V \setminus \{v\}$ two random variables $X_w$ and
  $Y_w$.  The first is an indicator random variable with $X_w = 1$ if
  $w$ is a neighbor of $v$ and $X_w = 0$ otherwise.  Additionally, the
  random variable $Y_w$ denotes the degree of $w$.  The sum of the
  degrees of the neighbors of~$v$ can then be written as
  \begin{align*}
    Z_v = \sum_{w \in V \setminus \{v\}} X_w \cdot Y_w.
  \end{align*}
  The expected value of $Z_v$ is given by
  \begin{align*}
    \mathbb{E}[Z_v] = \mathbb{E}\left[ \sum_{w \in V \setminus \{v\}} X_w \cdot Y_w \right] = \sum_{w \in V \setminus \{v\}} \mathbb{E}\left[ X_w \cdot Y_w \right],
  \end{align*}
  where the second equality holds due to the linearity of expectation.
  To compute the expected value of $X_w \cdot Y_w$ we can apply the
  law of total expectation and obtain
  \begin{align*}
    \mathbb{E}[Z_v] = \sum_{w \in V \setminus \{v\}} \sum_{x \in \{0, 1\}} \mathbb{E}[X_w \cdot Y_w \mid X_w = x] \cdot \Pr[X_w = x].
  \end{align*}
  Clearly, the case where $X_w = 0$ does not contribute anything to
  the sum, which can thus be simplified as
  \begin{align*}
    \mathbb{E}[Z_v] = \sum_{w \in V \setminus \{v\}} \mathbb{E}[Y_w \mid X_w = 1] \cdot \Pr[X_w = 1].
  \end{align*}
  Recall that $X_w = 1$ denotes the event where $w$ is a neighbor of
  $v$.  That is,
  \begin{align*}
    \Pr[X_w = 1] = \Pr[w \in N(v)] = \Pr[w \in D_{R}(v)] = \mu(D_{R}(v)).
  \end{align*}
  Moreover, recall that $Y_w$ denotes the random variable representing
  the degree of $w$.  Consequently, we can now write $\mathbb{E}[Z_v]$
  as
  \begin{align}
    \label{eq:expected-degree-sum-intermediate}
    \mathbb{E}[Z_v] &= \sum_{w \in V \setminus \{ v \}} \mu(D_R(v)) \cdot \mathbb{E}[Y_w \mid w \in D_R(v)] \notag \\
                    &= (n - 1) \cdot \mu(D_R(v)) \cdot \mathbb{E}[\deg(w) \mid w \in D_R(v)].
  \end{align}
  We continue by computing the expected degree of a vertex $w$
  \emph{conditioned} on the fact that it is contained in $D_R(v)$.  To
  this end, we first consider the expected value without the
  condition, analogous to how it was done previously~\cite{gpp-rhg-12}
  (see the proof of Theorem 2.3), and afterwards explain how to
  incorporate the condition.  The expected degree of a vertex $w$ with
  fixed radius $r$ is given by
  \begin{align*}
    \mathbb{E}[\deg(w) \mid r(w) = r] = (n - 1) \mu(D_{R}(w)).
  \end{align*}
  To obtain the expected degree of $w$ without
  fixing its radius (or angle for that matter) we then integrate
  $\mathbb{E}[\deg(w) \mid r(w) = r \wedge \varphi(w) = \varphi] \cdot
  f(r, \varphi)$ (note the joint distribution) over the whole
  disk. That is,
  \begin{align*}
    \mathbb{E}[\deg(w)] &= \iint_{D_R} \mathbb{E}[\deg(w) \mid r(w) = r \wedge \varphi(w) = \varphi] \cdot f(r, \varphi) \dif \varphi \dif r \\
                        &= \iint_{D_R} \mathbb{E}[\deg(w) \mid r(w) = r] \cdot f(r, \varphi) \dif \varphi \dif r,
  \end{align*}
  where the second step follows from the fact that the expected degree
  of a vertex is independent of its angular coordinate.

  It remains to include the condition on the fact that $w$ cannot be
  anywhere in the whole disk but lies in $D_R(v)$ instead.  First, we
  have to accommodate for the fact that if $w$ is a neighbor of $v$,
  then conversely $v$ is also a neighbor of~$w$.  Consequently, we
  know that $w$ has at least one neighbor, which we reflect in the
  expected value by introducing the condition on the position $p(v)$
  of $v$.  Moreover, in general the conditional expectation of a
  random variable~$X$ conditioned on an event $A$ (with $\Pr[A] > 0$)
  is given by
  $\mathbb{E}[X \mid A] = \int_{-\infty}^{\infty} x f_{X \mid A}(x)
  \dif x$, where~$f_{X \mid A}$ is defined as
  \begin{align*}
    f_{X \mid A}(x) =
    \begin{cases}
      \frac{f_X(x)}{\Pr[A]}, &x \in A,\\
      0, &x \notin A.
    \end{cases}
  \end{align*}
  Therefore, the above expression for the expected degree of $w$ can
  be adjusted to include the condition as
  \begin{align*}
    \mathbb{E}[\deg(w) \mid w \in D_R(v)] &= \iint_{D_R(v)} \mathbb{E}[\deg(w) \mid r(w) = r \wedge p(v) = (r, \varphi)] \\
                                          &\hphantom{= \iint_{D_R(v)}~} \cdot \frac{f(r, \varphi)}{\Pr[w \in D_R(v)]} \dif \varphi \dif r.
  \end{align*}
  Note that the probability $\Pr[w \in D_{R}(v)]$ in the denominator
  is, again, the measure of the disk of radius $R$ centered at $v$.
  Substituting this expression in
  Equation~\eqref{eq:expected-degree-sum-intermediate} for the
  expected sum $\mathbb{E}[Z_v]$ of the degrees of the neighbors of
  $v$, we get
  \begin{align*}
    \mathbb{E}[Z_v] &= (n - 1) \cdot \mu(D_R(v)) \cdot \mathbb{E}[\deg(w) \mid w \in D_R(v)] \\
                    &= (n - 1) \cdot \mu(D_R(v)) \cdot \\
                    &\hphantom{= (n - 1 \cdot~)} \cdot \iint_{D_R(v)} \mathbb{E}[\deg(w) \mid r(w) = r \wedge p(v) = (r, \varphi)] \frac{f(r, \varphi)}{\mu(D_R(v))} \dif \varphi \dif r \\
                    &= (n - 1) \cdot \iint_{D_R(v)} \mathbb{E}[\deg(w) \mid r(w) = r \wedge p(v) = (r, \varphi)] \cdot f(r, \varphi) \dif \varphi \dif r.
  \end{align*}
  To compute the integral, we determine the expected degree of $w$
  conditioned on the fact that $r(w) = r$ and on the position of $v$,
  which is a neighbor of $w$ deterministically.  Therefore, we obtain
  the expected degree by adding $1$ (for $v$) to the expected number
  of vertices among the remaining $V \setminus \{v, w\}$ that are
  sampled into $D_{R}(w)$ and obtain
  \begin{align*}
    \mathbb{E}[\deg(v) \mid r(w) = r \wedge p(v) = (r, \varphi)] = 1 + (n - 2)\mu(D_{R}(w)),
  \end{align*}
  which is $1 + \Theta(ne^{-r/2})$ due to
  Equation~\eqref{eq:expectedDegree}.  Note that $\Theta(ne^{-r/2})$
  is $\Omega(1)$ for all $r \in [0, R]$, allowing us to further
  simplify the expected value to
  \begin{align*}
    \mathbb{E}[\deg(v) \mid r(w) = r \wedge p(v) = (r, \varphi)] = \Theta(ne^{-r/2}).
  \end{align*}
  Moreover, recall that $f(r, \varphi) = 0$ for $r > R$ and that it
  can otherwise be bounded by $f(r, \varphi) = \Theta(e^{-\alpha(R -
    r)})$ (see Equation~\eqref{eq:probDensity}).  We obtain
  \begin{figure}
    \centering
    \includegraphics{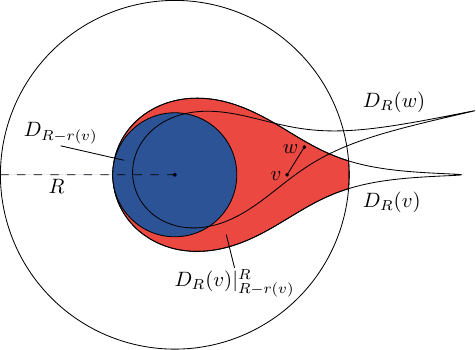}
    \caption{Situation in the proof of
      Lemma~\ref{lem:expected-neighborhood-degree-sum}.  Vertex $w$ is
      a neighbor of $v$.  To integrate $D_R(v) \cap D_R$, we split the
      region into two parts: $D_R(v)|_0^{R - r(v)} = D_{R - r(v)}$
      (blue) and $D_R(v)|_{R - r(v)}^R$ (red).}
    \label{fig:neighbor-degree}
  \end{figure}
  \begin{align*}
    \mathbb{E}[Z_v] &= \Theta \left( (n - 1) \cdot \iint_{D_R(v) \cap D_R} ne^{-r/2} \cdot e^{-\alpha(R - r)} \dif \varphi \dif r \right) \\
                    &= \Theta \left( n^2 e^{-\alpha R} \cdot \iint_{D_R(v) \cap D_R} e^{(\alpha - 1/2)r} \dif \varphi \dif r \right).
  \end{align*}
  We can now split the integral into two parts: one containing the
  disk $D_R(v)|_0^{R - r(v)} = D_{R - r(v)}$ and the other containing
  the remainder of $D_R(v) \cap D_R$, which is given by $D_R(v)|_{R -
    r(v)}^{R}$ (see Figure~\ref{fig:neighbor-degree}).  For the second
  part we can use Equation~\eqref{eq:maxAdjacentAngle} to bound the
  angle $\theta(r(v), r)$ up to which we need to integrate depending
  on $r$.  As a result, we get
  \begin{align*}
    \mathbb{E}[Z_v] &= \Theta \Bigg( n^2 e^{-\alpha R} \cdot \Bigg( \int_{0}^{R - r(v)} \int_{0}^{2\pi} e^{(\alpha - 1/2)r} \dif \varphi \dif r \\
                    &\hphantom{= \Theta \Bigg( n^2 e^{-\alpha R} \cdot \Bigg(~} + \int_{R - r(v)}^{R} \int_{0}^{\theta(r(v), r)} e^{(\alpha - 1/2)r} \dif \varphi \dif r \Bigg) \Bigg).
  \end{align*}
  Regarding the first part of the sum, note that evaluating the inner
  integral only contributes a constant factor that can be dropped due
  the $\Theta$-notation.  Computing the outer integral then yields
  $\Theta(e^{(\alpha - 1/2)(R - r(v))})$.  For the second part of the
  sum we, again, first evaluate the inner integral and substitute
  $\theta(r(v), r) = \Theta(e^{(R - r(v) - r)/2})$ (see
  Equation~\eqref{eq:maxAdjacentAngle}).  We obtain
  \begin{align*}
    \mathbb{E}[Z_v] &= \Theta \left( n^2 e^{-\alpha R} \cdot \left( e^{(\alpha - 1/2)(R - r(v))} + e^{(R - r(v))/2}\int_{R - r(v)}^{R} e^{-(1 - \alpha)r} \dif r \right) \right).
  \end{align*}
  The last integral evaluates to $\mathcal{O}(e^{-(1-\alpha)(R -
    r(v))})$, which multiplied by the factor $e^{(R - r(v))/2}$ yields
  asymptotically the same expression as the first summand and we get
  \begin{align*}
    \mathbb{E}[Z_v] &= \Theta \left( n^2 e^{-(\alpha - 1/2)r(v)} \cdot e^{-R/2} \right).
  \end{align*}
  Finally, we can substitute $R = 2\log(n) + C$ in order to obtain the
  claimed bound of
  $\mathbb{E}[Z_v] = \Theta(ne^{-(\alpha - 1/2)r(v)})$.
\end{proof}

For $c_1$ and $c_2$, which both have radius $\rho$, the degrees of
their neighbors thus sum to~$\mathcal{\tilde O}(n^{1/(2\alpha)})$ in
expectation.  However, to actually prove
Theorem~\ref{thm:main-theorem}, we need a bound that holds with high
probability for all possible angular coordinates of $c_1$ and $c_2$.
As with the sum of the degrees in a sector, we prove a slightly weaker
bound that matches the one in Theorem~\ref{thm:main-theorem} and holds
with high probability.  We obtain the following lemma.

\wormhole{main-theorem-second-phase}
\newcommand{\mainTheoremSecondPhaseText}{Let $G$ be a hyperbolic random graph and let $v$ be a hypothetical
  vertex with radius $\rho = 1/\alpha (\log n - \log\log n)$ and
  arbitrary angular coordinate.  The degrees of neighbors of $v$ sum
  to
  $\mathcal {\tilde O}(n^{2-1/\alpha} + n^{1/(2\alpha)} +
  \delta_{\max})$ with high probability.}
\begin{lemma}
  \label{lem:second-phase-concentration}
  \mainTheoremSecondPhaseText
\end{lemma}

Again, the proof is rather technical and thus deferred to
Section~\ref{sec:conc-bounds-sum-deg}. Together with the bounds on the
sum of degrees in a sector of width $\varphi = \mathcal{\tilde
  O}(n^{-(1/\alpha - 1)})$
(Theorem~\ref{thm:main-theorem-first-phase}), we obtain the following
corollary, which concludes the proof of
Theorem~\ref{thm:main-theorem}.

\begin{corollary}
  \label{cor:twoSteps}
  On a hyperbolic random graph, the second phase of the bidirectional
  search explores with high probability only $\mathcal{\tilde
    O}(n^{2-1/\alpha} + n^{1/(2\alpha)} + \delta_{\max})$ many edges.
\end{corollary}

\section{Concentration Bounds for the Sum of Vertex Degrees}
\label{sec:conc-bounds-sum-deg}

Here we prove the concentration bounds that were announced in the
previous section.  For the first phase, we already know that the
search space is contained within a sector $S$ of sublinear width
(Lemma~\ref{lem:angularInterval}).  Thus, the running time in the
first phase is bounded by the sum of vertex degrees in this sector.
Moreover, all edges explored in the second phase also lie within the
same sector $S$ or are incident to neighbors of the two hypothetical
vertices $c_1$ and $c_2$ (Lemma~\ref{lem:boundaryCover}).  Thus, the
running time of the second phase is bounded by the sum of vertex
degrees in $S$ and in the neighborhood of $c_1$ and $c_2$.

In both cases, we have to bound the sum of vertex degrees in certain
areas of the disk, which can be done as follows.  For each degree, we
want to compute the number of vertices of this degree in the
considered area and multiply it with the degree.  As all vertices with
a certain degree have roughly the same radius, we can separate the
disk into small bands, one for each degree.  Then summing over all
degrees comes down to summing over all bands and multiplying the
number of vertices in this band with the corresponding degree.  If we
can prove that each of these values is highly concentrated (i.e.,
holds with probability $1-\mathcal{O}(n^{-2})$), we obtain that the
sum is concentrated as well (using the union bound).  Unfortunately,
this fails in two situations.  For small radii, the number of vertices
within the corresponding band (i.e., the number of high degree
vertices) is too small to be concentrated.  Moreover, for large radii
the degree is too small to be concentrated around its expected value.

To overcome this issue, we partition the disk $D_{R}$ into three
parts.  An inner part $D_R|_0^{\rho_I(\varphi)}$, containing all
points of radius at most $\rho_I(\varphi)$, an outer part
$D_R|_{\rho_O}^R$, containing all points of radius at least $\rho_O$,
and a central part $D_R|_{\rho_I(\varphi)}^{\rho_O}$, containing all
points in between.  We choose $\rho_I(\varphi)$ such that the number
of vertices with maximum degree in a sector part
$S|_{\rho_I(\varphi)}^{\rho_O}$ of angular width $\varphi$ is
$\Omega(\log n)$, which ensures that for each vertex degree, the
number of vertices with this degree is concentrated.  Moreover, we
choose $\rho_O$ in such a way that the vertex degrees in
$S|_{\rho_I(\varphi)}^{\rho_O}$ are sufficiently concentrated.  To
achieve this, we set
\[\rho_I(\varphi) = R - \frac{1}{\alpha} \left( \log(\varphi / (2\pi)) + \log n - \log\log
    n \right) \text{ and } \, \rho_O = R - (2 +
  \varepsilon)\log\log(n),\] for any constant $\varepsilon \in (0, 1)$, and
show concentration for the sum of the degrees in a sector and in the
neighborhood of a vertex with radius $\rho$, separately for the three
parts of the disk.

\subsection{The Inner Part of the Disk}

The inner part $D_R|_0^{\rho_I(\varphi)}$ contains vertices of high
degree.  It is not hard to see that there are only logarithmically
many vertices with radius at most $\rho_I(\varphi)$.
\begin{lemma}
  \label{lem:sector-polylog-radius}
  Let $G$ be a hyperbolic random graph, let $\varphi \in [0, 2\pi]$ be
  an angle, and let $\xi > 0$ be a constant.  A
  sector~$S|_0^{\rho_I(\varphi)}$ of angular width $\xi\varphi \in [0,
  2\pi]$ contains $\mathcal{O}(\log(n))$ vertices, with probability $1
  - \mathcal{O}(n^{-c})$ for any constant $c$.
\end{lemma}
\begin{proof}
  By Equation~\eqref{eq:originBallApproximation} the expected number
  of vertices in the disk $D_R|_0^{\rho_I(\varphi)}$ is given by
  \begin{align*}
    \mathbb{E}[|\{v \in D_R|_0^{\rho_I(\varphi)}\}|] = ne^{-\alpha(R - \rho_I(\varphi))}(1 + o(1)).
  \end{align*}
  Since the angular coordinates of
  the vertices are distributed uniformly in $[0, 2\pi]$, the expected
  number of vertices in a sector portion~$S|_0^{\rho_I(\varphi)}$ of
  angular width $\xi\varphi$ is
  \begin{align*}
    \mathbb{E}[|\{v \in S|_0^{\rho_I(\varphi)}\}|] &= \frac{\xi\varphi}{2\pi} n e^{-\alpha(R - \rho_I(\varphi))} (1 + o(1)) \\
                                                   &= \frac{\xi\varphi}{2\pi} n e^{-(\log(\varphi / 2\pi) + \log n - \log \log n)} (1 + o(1)) \\
                                                   &= \xi\log (n) (1 + o(1)).
  \end{align*}
  Since $\xi > 0$ is constant, this bound is in $\Omega(\log n)$ and
  we can apply Corollary~\ref{col:chernoff-bound} to conclude that
  $|\{v \in S|_0^{\rho_I(\varphi)}\}| = \mathcal{O}(\log (n))$ holds
  with probability $1 - \mathcal{O}(n^{-c})$ for any constant $c$.
\end{proof}

Note that, if $\varphi \in \Omega(1/n)$, we can choose at most
$\mathcal{O}(n)$ sectors of width $2\varphi$ such that any sector of
width $\varphi$ lies completely in one of them.  Thus, the probability
that there exists a sector portion $S|_0^{\rho_I(\varphi)}$ where the
number of vertices is super-logarithmic, is bounded by the probability
that it is too large in at least one of these $\mathcal{O}(n)$ sectors
(of twice the width).  By choosing $\xi = 2$, we can apply
Lemma~\ref{lem:sector-polylog-radius} to conclude that a single sector
$S|_0^{\rho_I(\varphi)}$ of twice the angular width contains at most
$\mathcal{O}(\log(n))$ vertices with probability $1 -
\mathcal{O}(n^{-2})$.  Applying the union bound and incorporating the
fact that the maximum degree in the graph is $\delta_{\max}$ we can
bound the number of edges in every such sector portion and obtain the
following corollary.

\begin{corollary}
  \label{cor:number_edges_in_inner_sector}
  Let $G$ be a hyperbolic random graph.  For every sector $S$ of
  angular width $\varphi \in \Omega(1/n)$, the degrees of the vertices
  in $S|_0^{\rho_I(\varphi)}$ sum to $\mathcal {\tilde
    O}(\delta_{\max})$ with high probability.
\end{corollary}

Note that, in particular the statement holds for the previously
determined angle $\varphi = \mathcal{\tilde{O}}(n^{-(1/\alpha - 1)})$
for $\alpha \in (1/2, 1)$.  Additionally, by setting $\varphi = 2\pi$,
we can use Lemma~\ref{lem:sector-polylog-radius} to bound the sum of
the degrees of the high degree vertices in the neighborhood of a
vertex with radius $\rho$.

\begin{corollary}
  \label{cor:number_edges_in_inner_neighborhood}
  Let $G$ be a hyperbolic random graph.  For every vertex $v$ of
  radius $\rho$, the degrees of the neighbors of $v$
  in~$D_R|_0^{\rho_I(2\pi)}$ sum to $\mathcal {\tilde
    O}(\delta_{\max})$ with high probability.
\end{corollary}

\subsection{The Central Part of the Disk}

For each possible vertex degree $k$, we want to compute the number of
vertices with this degree in the central
part~$D_R|_{\rho_I(\varphi)}^{\rho_O}$.  First note, that by
Equation~\eqref{eq:expectedDegree} a vertex with fixed radius has
expected degree $\Theta(k)$ if this radius is $2\log(n/k)$.  Motivated
by this, we define $r_k = 2\log(n/k)$.  To bound the sum of degrees in
the central part $D_R|_{\rho_I(\varphi)}^{\rho_O}$, we use that
vertices with radius significantly larger than $r_k$ also have a
smaller degree.  To this end, we first prove that a vertex with degree
$k$ can actually not have a radius much larger than~$r_k$.  This has
the advantage, that we can bound the number of degree-$k$ vertices by
bounding the number of vertices with these radii.

\begin{lemma}
  \label{lem:probDegreeTooLarge}
  Let $G$ be a hyperbolic random graph.  Then, for every constant $c >
  0$, there exist constants $\kappa, \tau > 0$, such that all vertices
  with degree at least $k \ge \kappa \log n$ have radius at most $r_k
  + \tau$ with probability $1 - \mathcal{O}(n^{-c})$.
\end{lemma}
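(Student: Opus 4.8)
The plan is to show that any vertex whose radius exceeds $r_k + \varepsilon$ has conditional expected degree bounded away from $k$ by a constant factor, and then to turn this constant-factor gap into a polynomially small failure probability via a Chernoff bound, exploiting $k = \Omega(\log n)$, so that a union bound over all $n$ vertices still gives $\mathcal{O}(n^{-c})$.

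First I would fix a vertex $v$ and condition on its radius $r(v) = r$. Since the $n$ points are drawn independently, conditioned on $r(v)$ the degree $\deg(v)$ is a sum of $n-1$ independent Bernoulli variables, each indicating whether one of the remaining vertices falls into the neighborhood disk $D_0^R(r,\varphi)$ of $v$. By Equation~\eqref{eq:expectedDegree} this happens with probability $\Theta(e^{-r/2})$ regardless of the angle $\varphi$, so $\mathbb{E}[\deg(v) \mid r(v) = r] = (n-1)\,\Theta(e^{-r/2}) \le C n e^{-r/2}$ for a suitable constant $C$. This quantity is monotonically decreasing in $r$, and using $e^{-r_k/2} = k/n$ we obtain, for all $r \ge r_k + \varepsilon$,
\[
  \mathbb{E}[\deg(v) \mid r(v) = r] \le C n e^{-(r_k + \varepsilon)/2} = C k e^{-\varepsilon/2}.
\]

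Next I would choose $\varepsilon$. For any target factor $\gamma \in (0,1)$, setting $\varepsilon = 2\log(C/\gamma)$ forces the conditional expected degree to be at most $\gamma k$ whenever $r(v) \ge r_k + \varepsilon$. Writing $\mu = \mathbb{E}[\deg(v) \mid r(v) = r] \le \gamma k$ and applying the Chernoff bound of Theorem~\ref{thm:chernoff-bound-original} with $\delta$ chosen so that $(1+\delta)\mu = k$, a short calculation gives $\delta^2 \mu = (k-\mu)^2/\mu \ge (1-\gamma)^2 k / \gamma$, since $(k-\mu)^2/\mu$ is decreasing in $\mu$ on $(0,k)$. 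Hence
\[
  \Pr[\deg(v) \ge k \mid r(v) = r] \le \exp\!\left( -\frac{(1-\gamma)^2}{3\gamma}\, k \right).
\]
Because $k = \Omega(\log n)$, say $k \ge a \log n$, the right-hand side is at most $n^{-a(1-\gamma)^2/(3\gamma)}$, and by shrinking $\gamma$ (equivalently enlarging the constant $\varepsilon$) this exponent can be pushed above $c+1$ for any prescribed constant $c$.

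Finally I would drop the conditioning and take a union bound. Since the conditional bound is uniform over all $r \ge r_k + \varepsilon$ (the expected degree only decreases as $r$ grows, so $r = r_k + \varepsilon$ is the worst case), the probability that a fixed vertex simultaneously has radius exceeding $r_k + \varepsilon$ and degree at least $k$ is at most $n^{-(c+1)}$; summing over all $n$ vertices yields the claimed $\mathcal{O}(n^{-c})$. A minor technical point is that when $r_k + \varepsilon > R$ the statement is vacuous, as every radius is at most $R$, so we may assume $r_k + \varepsilon \le R$. The main (though not severe) obstacle is ensuring $\varepsilon$ can be fixed independently of $n$ while still beating the arbitrary exponent $c$; this is precisely why the factor $\gamma$ is left free above and why the hypothesis $k = \Omega(\log n)$ is essential, as it converts the constant-factor gap in the expectation into a failure probability that is polynomially small in $n$.
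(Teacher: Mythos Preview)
Your proof is correct and follows essentially the same route as the paper: union bound over vertices, reduce via monotonicity of the expected degree to the worst case $r = r_k + \varepsilon$, compute $\mathbb{E}[\deg(v)\mid r(v)=r_k+\varepsilon] = \Theta(k e^{-\varepsilon/2})$, apply the Chernoff bound of Theorem~\ref{thm:chernoff-bound-original}, and use $k=\Omega(\log n)$ to turn the constant-factor gap into an $n^{-(c+1)}$ failure probability. The only cosmetic difference is that the paper invokes Lemma~\ref{lem:smallerRadiusIncreasesNeighborhood} to justify the reduction to $r = r_k+\varepsilon$, whereas you argue monotonicity of $r\mapsto ne^{-r/2}$ directly.
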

\begin{proof}
  To prove this lemma, it suffices to show that there exist constants
  $\kappa, \tau > 0$, such that the probability of a vertex with
  radius greater than $r_k + \tau$ having degree at least $k$, i.e.
  $\Pr[\exists v \in V \colon \deg(v) \ge k \land r(v) \ge r_k +
  \tau]$, is small.  To obtain the following sequence of inequalities,
  we first use the union bound, then apply the definition of
  conditional probabilities, and finally use
  Lemma~\ref{lem:smallerRadiusIncreasesNeighborhood}.
  \begin{align*}
    \Pr[\exists v \in V \colon \deg(v) \ge k \land r(v) \ge r_k + \tau] &\leq n \cdot \Pr[\deg(v) \ge k \land r(v) \ge r_k + \tau] \notag \\ 
                                                                        &\leq n \cdot \Pr[\deg(v) \ge k~\vert~r(v) \ge r_k + \tau]\\
                                                                        &\le n\cdot \Pr[\deg(v) \ge k~\vert~r(v) = r_k + \tau].
  \end{align*}
  To prove the statement of the lemma, it remains to show that
  $\Pr[\deg(v) \ge k \mid r(v) = r_k + \tau]$ is sufficiently small,
  i.e., in~$\mathcal O(n^{-(c+1)})$.

  Recall that, by Equation~(\ref{eq:expectedDegree}), the expected
  degree of a vertex with radius $r$ is in $\Theta(n e^{-r/2})$.  For
  a vertex $v$ with radius $r_k + \tau$ we obtain $ne^{-(r_k + \tau) /
    2} = e^{-\tau/2}k$.  It follows that there exists a constant $c' >
  0$, such that $\mathbb{E}[\deg(v)] \le c'e^{-\tau/2}k$.  By choosing
  $\tau$ large enough we can ensure that $k \ge 2e
  \mathbb{E}[\deg(v)]$, allowing us to apply the Chernoff-Hoeffding
  bound in Theorem~\ref{thm:chernoff-bound-original}.  We obtain
  $\Pr[\deg(v) \ge k] \le 2^{-k}$.  Finally, since $k \ge \kappa \log
  n$, we can choose $\kappa$ such that this probability is bounded by
  $\mathcal{O}(n^{-(c+1)})$.
\end{proof}

We are now ready to bound the number of vertices in a sector that have
degree at least~$k$.  As mentioned earlier, this bound only works for
large $k$ as the degree is not sufficiently concentrated otherwise.
Moreover, the degree cannot be too large, as otherwise the number of
vertices of this degree is not concentrated.  The upper bound on $k$
in the following lemma directly corresponds to our choice for
$\rho_I(\varphi)$.  Additionally, $\rho_O$ is chosen such that the
degrees of vertices with radii smaller than $\rho_O$ meets the lower
bound on $k$, i.e., the lemma holds for the central
part~$S|_{\rho_I(\varphi)}^{\rho_O}$.

\begin{lemma}
  \label{lem:numberVerticesWithDegreeLeastK}
  Let $G$ be a hyperbolic random graph and let $S$ be a sector of
  angular width $\varphi$.  If $k = \omega(\log n)$ and $k =
  \mathcal{O}((\varphi n / \log n)^{1/(2\alpha)})$, then the number of
  vertices in $S$ with degree at least $k$ is in $\mathcal O(\varphi n
  k^{-2\alpha})$ with probability $1 - \mathcal{O}(n^{-c})$ for any
  constant $c > 0$.
\end{lemma}
\begin{proof}
  By Lemma~\ref{lem:probDegreeTooLarge} we know that, for any constant
  $c' > 0$, there are constants $\kappa, \tau > 0$ such that all
  vertices of degree at least $k \ge \kappa \log n$ have radius at
  most $r_k + \tau$, with probability $1 - \mathcal{O}(n^{-c'})$.
  Since $k = \omega(\log n)$ we have $k \ge \kappa \log n$ for large
  enough $n$ and obtain that, with the same probability, all vertices
  of degree at least $k$ that are in $S$ are in $S|_0^{r_k + \tau}$.
  Since the angular width of $S$ is $\varphi$ and since the angular
  coordinates of the vertices are distributed uniformly, the expected
  number of vertices in $S|_0^{r_k + \tau}$ is given by
  $\varphi/(2\pi) \cdot n \mu(D_R|_0^{r_k + \tau})$.  Now we can apply
  Equation~(\ref{eq:originBallApproximation}), which states that a
  disk of radius $r_k + \tau$ centered at the origin has measure
  $e^{-\alpha(R - (r_k + \tau))}(1 + o(1))$ and obtain
  \begin{align*}
    \mathbb{E}[|\{v \in S|_0^{r_k + \tau}\}|] &= \frac{\varphi}{2\pi} n \mu(D_R|_0^{r_k + \tau})\\
                                              &= \frac{\varphi}{2\pi} ne^{-\alpha(R - (r_k + \tau))}(1 + o(1))\\
                                              &= \frac{\varphi}{2\pi} ne^{-2\alpha \log k - \alpha (C - \tau)} (1 + o(1)) \\
                                              &= \Theta(\varphi n k^{-2\alpha}).
  \end{align*}
  Note that $k = \mathcal{O}((\varphi n / \log n)^{1/(2\alpha)})$
  (which is a precondition of this lemma) implies that $\varphi n
  k^{-2\alpha} = \Omega(\log n)$.  Thus, we can apply the
  Chernoff-Hoeffding bound in Corollary~\ref{col:chernoff-bound} to
  conclude that $|\{v \in S|_0^{r_k + \tau}\}| = \mathcal{O}(\varphi n
  k^{-2\alpha})$ holds with probability $1 - \mathcal{O}(n^{-c})$ for
  any constant $c > 0$.
\end{proof}

Using these results, we can now bound the size of the search space in
the central part~$S|_{\rho_I(\varphi)}^{\rho_O}$ of our sector $S$,
yielding the following lemma.  (We note that the lower bound on
$\varphi$ that is a requirement of the following lemma, is weaker than
the one we need for Theorem~\ref{thm:main-theorem-first-phase}.)

\begin{lemma}
  \label{lem:sumOfDegreesCentralPart}
  Let $G$ be a hyperbolic random graph.  For every sector $S$ of
  angular width~$\varphi \in \Omega(\log(n)^{2\alpha + 1} / n)$, the
  degrees of the vertices in $S|_{\rho_I(\varphi)}^{\rho_O}$ sum to
  $\mathcal{O}(\varphi n)$ with high probability.
\end{lemma}
\begin{proof}
  First note that, analogous to the argumentation about sectors in the
  inner part of the disk, we can choose at most $\mathcal{O}(n)$
  sectors of width $2\varphi$ such that any sector of width $\varphi$
  lies completely in one of them.  Thus, the probability that there
  exists a sector where the sum of the vertex degrees in the central
  part of the disk is too large, is bounded by the probability that it
  is too large in at least one of these $\mathcal{O}(n)$ sectors (of
  twice the width).  In the following, we show for a single sector $S$
  of angular width $\varphi' = 2\varphi$ that the probability that the
  sum is too large is $\mathcal O(n^{-2})$.  The union bound then
  yields the claim, that the bound holds for every sector of angular
  width $\varphi$.

  To sum the degrees of all vertices in $S$, think of a vertex $v$ of
  degree $\deg(v)$ as a rectangle of height~$1$ and width $\deg(v)$.
  For a small graph, Figure~\ref{fig:degreeSumToIntegral} shows all
  such rectangles stacked on top of each other, sorted by their
  degree.  Note that the sum of degrees is equal to the area under the
  function $g(x) = |V^S_x|$ where $V^S_x = \{ v \in S~\vert~ \deg(v)
  \geq x\}$ is the set of vertices in $S$ that have degree at least
  $x$.
  Also note that the above considerations do not take into account
  that we sum only the degrees of vertices in the central part
  $S|_{\rho_I(\varphi')}^{\rho_O}$ of~$S$.  To resolve this,
  let~$k_{\min}$ and $k_{\max}$ be the minimum and maximum degree of
  vertices in~$S|_{\rho_I(\varphi')}^{\rho_O}$, respectively.  One can
  see in Figure~\ref{fig:degreeSumToIntegral} that summing only those
  degrees that are larger than~$k_{\min}$ is equivalent to integrating
  over $|V^S_{\max(k_{\min}, x)}|$ instead of $|V^S_x|$.  Thus, we can
  compute the sum of all degrees as
  \begin{align*}
    \sum_{v \in S|_{\rho_I(\varphi')}^{\rho_O}} \deg(v) &\le \sum_{\substack{v \in S,\\k_{\min} \leq \deg(v) \leq k_{\max}}} \deg(v) \\
                                                        &= \int_{0}^{k_{\max}} \vert V_{\max(k_{\min}, x)}^{S} \vert \dif x \notag \\
                                                        &= k_{\min} \vert V_{k_{\min}}^{S} \vert + \int_{k_{\min}}^{k_{\max}} \vert V_{x}^{S} \vert \dif x.
  \end{align*}

  \begin{figure}[t]
    \centering
    \includegraphics{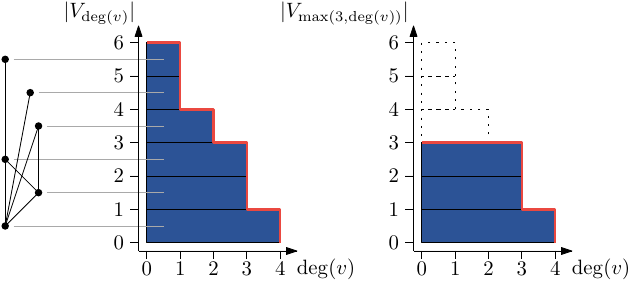}
    \caption{Visualization of how the sum over the degrees can be
      turned into an integral (left).  The same visualization but only
      the degrees of vertices with degree at least $3$ are summed up
      (right).}
    \label{fig:degreeSumToIntegral}
  \end{figure}

  To compute this integral, we first calculate the minimum and maximum
  degrees $k_{\min}$ and $k_{\max}$.  Afterwards, we apply
  Lemma~\ref{lem:numberVerticesWithDegreeLeastK} to bound $|V_x^S|$.
  For the minimum degree $k_{\min}$, assume that vertex $v$ has radius
  $\rho_O = R - (2+\varepsilon)\log\log(n)$ for any constant
  $\varepsilon \in (0, 1)$.  Using Equation~\eqref{eq:expectedDegree}
  the expected degree of $v$ is
  \begin{align*}
    \mathbb{E}[\deg(v)] = \Theta(n e^{-R/2 + (1 + \varepsilon / 2)\log\log(n)}) = \Theta(\log(n)^{1 + \varepsilon / 2}).
  \end{align*}
  Since $\varepsilon > 0$, this bound is $\omega(\log n)$, allowing us
  to apply the Chernoff-Hoeffding bounds in
  Corollaries~\ref{col:chernoff-bound}
  and~\ref{col:chernoff-lower-bound} to conclude that
  $\deg(v) = \Theta(\log(n)^{1 + \varepsilon / 2})$ with high
  probability.  Note that this only holds under the assumption that
  $v$ has radius exactly $\rho_O$.  However, by
  Lemma~\ref{lem:smallerRadiusIncreasesNeighborhood} all vertices with
  smaller radius have larger expected degree.  Therefore,
  $\Theta(\log(n)^{1 + \varepsilon / 2})$ is a lower bound on the
  expected degrees of all such vertices, allowing us to apply
  Corollary~\ref{col:chernoff-lower-bound} together with a union
  bound, to conclude that, with high probability, no vertex with
  smaller radius has smaller degree.  Thus, with high probability, the
  minimum degree in~$S|_{\rho_I(\varphi')}^{\rho_O}$ is
  $k_{\min} = \Theta(\log(n)^{1 + \varepsilon / 2})$.  Analogously,
  the bound on the maximum degree $k_{\max}$ of a vertex
  in~$S|_{\rho_I(\varphi')}^{\rho_O}$ can be obtained as follows.  Let
  $v$ be a vertex with radius
  $\rho_I(\varphi') = R - 1/\alpha(\log(\varphi' / 2\pi) + \log n -
  \log\log n)$.  The expected degree of $v$ is
  $\mathbb{E}[\deg(v)] = \Theta((\varphi' n / \log n)^{1/(2 \alpha)})$
  (Equation~\eqref{eq:expectedDegree}). Since
  $\varphi' = 2\varphi \in \Omega(\log(n)^{2\alpha + 1}/n)$, which is
  a precondition of this lemma, we can conclude that this bound on the
  expected degree of $v$ is $\Omega(\log n)$, allowing us to apply
  Corollary~\ref{col:chernoff-bound} to conclude that
  $\mathbb{E}[\deg(v)] = \mathcal{O}((\varphi n / \log
  n)^{1/(2\alpha)})$ holds with high probability.  Again, this only
  holds under the assumption that $v$ has radius exactly
  $\rho_I(\varphi')$.  However, by
  Lemma~\ref{lem:smallerRadiusIncreasesNeighborhood} all vertices with
  larger radius have smaller expected degree.  Therefore,
  $\mathcal{O}((\varphi n / \log n)^{1/(2\alpha)})$ is a valid upper
  bound on all their expected degrees, allowing us to apply
  Corollary~\ref{col:chernoff-bound} together with a union bound, to
  conclude that no vertex with larger radius has larger degree.  Thus,
  the maximum degree in $S|_{\rho_I(\varphi')}^{\rho_O}$ is
  $k_{\max} = \mathcal{O}((\varphi n / \log n)^{1/(2\alpha)})$ with
  high probability.

  Using Lemma~\ref{lem:numberVerticesWithDegreeLeastK} we obtain
  $|V_{x}^{S}| = \mathcal{O}(\varphi n x^{-2\alpha})$ with probability
  $1 - \mathcal{O}(n^{-c})$ for any constant $c > 0$.  Note that the
  requirements $x = \omega(\log n)$ and $x = \mathcal{O}((\varphi n /
  \log n)^{1/(2\alpha)})$ in
  Lemma~\ref{lem:numberVerticesWithDegreeLeastK} are satisfied as
  $k_{\min} \le x \le k_{\max}$.  By choosing $c = 2$ and applying the
  union bound over all degrees, we can conclude that, with high
  probability
  \begin{align}
    \sum_{v \in S|_{\rho_I(\varphi')}^{\rho_O}} \deg(v) &= \mathcal O(\varphi n k_{\min}^{-(2\alpha - 1)}) + \mathcal O(\varphi n \cdot \int_{k_{\min}}^{k_{\max}} x^{-2\alpha} \dif x) \notag \\
                                                        &= \mathcal O(\varphi n k_{\min}^{-(2\alpha - 1)}) + \mathcal O(\varphi n \cdot k_{\min}^{-(2\alpha - 1)}(1 - (k_{\min}/k_{\max})^{2\alpha - 1})). \notag
  \end{align}
  As $k_{\min} \le k_{\max}$, this can be further simplified to
  $\mathcal O(\varphi n k_{\min}^{-(2\alpha - 1)})$, which is
  $\mathcal O(\varphi n)$ since $k_{\min} = \omega(\log n)$.
\end{proof}

It remains to bound the sum of the degrees of vertices in the central
part of the disk~$D_R|_{\rho_I(2\pi)}^{\rho_O}$ that lie in the
neighborhood of a vertex $v$ with radius $\rho$, i.e., vertices lying
in $D_R(v)$.  Similar to the bounds for a sector $S$, we bound the sum
of degrees in $D_R(v)$ by bounding the number of vertices with a fixed
degree $k$ for every possible value of~$k$.  If all these bounds hold
with probability $1-\mathcal{O}(n^{-3})$, then the union bound shows
that the sum is concentrated with probability $1-\mathcal{O}(n^{-2})$.
To obtain a bound that holds for every possible angular coordinate of
$v$ (as claimed in Section~\ref{search-space-in-second-phase}), we
apply Lemma~\ref{lem:arbitrary-angle}.  There, we choose the random
variables $X_w$ to represent the degrees of the vertices.  Our bound
on the sum that holds with probability $1 - \mathcal{O}(n^{-2})$ at a
fixed angular coordinate, can then be translated to the same
asymptotic bound that holds with probability $1 - \mathcal{O}(n^{-1})$
at every possible angular coordinate.

\begin{figure}[t]
  \centering
  \includegraphics{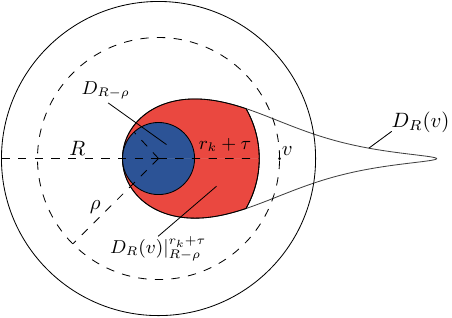}
  \caption{Determining the sum of degrees of the neighbors of vertex
    $v$ that are all contained in~$D_R(v)$.  To compute the measure of
    $D_R(v)|_{0}^{r_k + \tau}$ we divide it into two regions
    $D_R(v)|_0^{R - \rho} = D_{R - \rho}$ (blue) and $D_R(v)|_{R -
      \rho}^{r_k + \tau}$ (red).}
  \label{fig:secondPhaseDegreeSum}
\end{figure}

For a fixed degree $k = \omega(\log n)$, all vertices with degree at
least $k$ have radius at most $r_k + \tau$ with high probability due
to Lemma~\ref{lem:probDegreeTooLarge}, where $r_k = 2\log(n/k)$ and
$\tau$ is constant.  Thus, all vertices of degree at least $k$ in
$D_R(v)$ lie in $D_R(v)|_0^{r_k+\tau}$, with high probability.  In
analogy to Lemma~\ref{lem:numberVerticesWithDegreeLeastK}, we obtain
the following bound on the number of vertices in
$D_R(v)|_0^{r_k+\tau}$.

\begin{lemma}
  \label{lem:numberOfVerticesWithCertainDegreeNeighborhood}
  Let $G$ be a hyperbolic random graph and let $v$ be a vertex with
  radius $\rho = 1/\alpha (\log n - \log\log n)$.  If
  $k = \omega(\log n)$, the number of neighbors of $v$ with degree at
  least~$k$ is
  \begin{align*}
    |\{w \in N(v) \mid \deg(w) \ge k\}| = \mathcal{O}(n^{1-1/(2\alpha)} \log(n)^{1/(2\alpha)}k^{-(2\alpha - 1)} + \log n) 
  \end{align*}
   with probability $1-\mathcal O(n^{-c})$ for any constant $c
  > 0$.
\end{lemma}
\begin{proof}
  Since $k = \omega(\log n)$, we can apply
  Lemma~\ref{lem:probDegreeTooLarge} stating that all vertices of
  degree at least~$k$ in $D_R(v)$ lie within $D_R(v)|_{0}^{r_k+\tau}$
  with high probability.  To bound the number of neighbors of $v$ with
  degree at least $k$ we first compute the measure
  $\mu(D_R(v)|_0^{r_k+\tau})$.  To do this, we separate
  $D_R(v)|_0^{r_k+\tau}$ into the disk
  $D_R(v)|_0^{R-\rho} = D_{R - \rho}$ and
  $D_R(v)|_{R-\rho}^{r_k+\tau}$; see
  Figure~\ref{fig:secondPhaseDegreeSum}.  Due to
  Equation~\eqref{eq:originBallApproximation}, we have
  $\mu(D_{R-\rho}) = \mathcal O(e^{-\alpha(R - (R - \rho))}) =
  \mathcal O(\log n/n)$, which is already an upper bound on
  $\mu(D_R(v)|_{0}^{r_k + \tau})$ for the case where
  $r_k + \tau \le R - \rho$.  When $r_k + \tau > R - \rho$, we need to
  add the measure of $D_R(v)|_{R-\rho}^{r_k+\tau}$, which is given by
  \begin{align*}
    \mu(D_R(v)|_{R-\rho}^{r_k+\tau}) = \int_{R - \rho}^{r_k+\tau} 2 \int_{0}^{\theta(\rho, r)} f(r, \phi) \dif \phi \dif r = \mathcal O \left( \int_{R - \rho}^{r_k+\tau} \theta(\rho, r) f(r) \dif r \right)
  \end{align*}
  Since we consider $r \in [R - \rho, r_k+\tau]$ in the integral, we
  have $r \ge R - \rho$, allowing us to apply
  Equation~\eqref{eq:maxAdjacentAngle} to conclude that $\theta(\rho,
  r) = \mathcal O(e^{(R - \rho - r)/2})$.  Furthermore, we can
  substitute the probability density $f(r) = \mathcal O(e^{-\alpha(R -
    r)})$ (Equation~\eqref{eq:probDensity}) to obtain
  \begin{align*}
    \mu(D_R(v)|_{R-\rho}^{r_k+\tau}) &= \mathcal O \left( \int_{R - \rho}^{r_k+\tau} e^{(R - \rho - r)/2}\cdot e^{-\alpha(R - r)} \dif r \right)\\
                                     &= \mathcal O \left( e^{(R - \rho)/2}\cdot e^{-\alpha R} \cdot\int_{R - \rho}^{r_k+\tau} e^{(\alpha - 1/2) r} \dif r \right)\\
                                     &= \mathcal O \left(e^{-(\alpha - 1/2)R} \cdot e^{ - \rho/2} \cdot \left[ e^{(\alpha - 1/2)(r_k+\tau)} - e^{(\alpha - 1/2)(R - \rho)} \right] \right).
  \end{align*}
  Dropping the negative term in the brackets and substituting $R =
  2\log n + C$, $\rho = 1/\alpha (\log n - \log\log n)$, and $r_k =
  2\log(n/k)$, we obtain
  \begin{align*}
    \mu(D_R(v)|_{R-\rho}^{r_k+\tau}) &= \mathcal O \left(e^{-(\alpha - 1/2)R} \cdot e^{ - \rho/2} \cdot e^{(\alpha - 1/2)(r_k+\tau)} \right)\\
                                     &= \mathcal{O} \left(n^{-(2\alpha - 1)} \cdot n^{-1/(2\alpha)} \log(n)^{1/(2\alpha)} \cdot n^{2\alpha - 1}\cdot k^{-(2\alpha - 1)} \right)\\
                                     &= \mathcal{O} \left((\log(n) / n)^{1/(2\alpha)} \cdot k^{-(2\alpha - 1)} \right).
  \end{align*}

  The expected number of vertices in $D_R(v)|_0^{r_k + \tau}$ is now
  obtained by reversing the previous split and adding the measures of
  $D_{R - \rho}$ and $D_R(v)|_{R - \rho}^{r_k + \tau}$, which yields
  \begin{align*}
    \mathbb{E}[|\{v \in D_R(v)|_{0}^{r_k + \tau}\}|] &= n \cdot \left( \mu(D_{R - \rho}) + \mu(D_R(v)|_{R - \rho}^{r_k + \tau}) \right) \\
                                                     &= \mathcal{O}(\log n + n^{1 - 1/(2\alpha)} \log(n)^{1/(2\alpha)} k^{-(2\alpha - 1)})
  \end{align*}
  and it remains to show that this bound holds with large enough
  probability.  Clearly, this bound is at least logarithmic.  Thus, we
  can apply Corollary~\ref{col:chernoff-bound} to conclude that it
  holds with probability $1-\mathcal O(n^{-c})$ for any constant $c$.
\end{proof}

With this, we are now ready to bound the sum of the degrees of the
vertices in the central part of the disk that are in the neighborhood
of a vertex with radius $\rho$.  The proof of the following lemma is
analogous to the one of Lemma~\ref{lem:sumOfDegreesCentralPart}.

\begin{lemma}
  \label{lem:degrees-hypothetical-vertex}
  Let $G$ be a hyperbolic random graph and let $v$ be a hypothetical
  vertex with radius $\rho = 1/\alpha (\log n - \log\log n)$ and
  arbitrary angular coordinate.  The degrees of neighbors of $v$ in
  $D_R|_{\rho_I(2\pi)}^{\rho_O}$ sum to $\mathcal {\tilde
    O}(n^{1/(2\alpha)})$ with high probability.
\end{lemma}
\begin{proof}
  Recall that $D_R(v)$ is the disk containing all neighbors of $v$.
  To bound the sum of the degrees of the vertices
  in~$D_R(v)|_{\rho_I(2\pi)}^{\rho_O}$, we use basically the same
  proof as in Lemma~\ref{lem:sumOfDegreesCentralPart} except we use
  Lemma~\ref{lem:numberOfVerticesWithCertainDegreeNeighborhood}
  instead of Lemma~\ref{lem:numberVerticesWithDegreeLeastK}.  Thus,
  \begin{align*}
    \sum_{w \in D_R(v)|_{\rho_I(2\pi)}^{\rho_O}} \deg(w) &\le k_{\min} \vert V_{k_{\min}}^{D_R(v)} \vert + \int_{k_{\min}}^{k_{\max}} \vert V_{x}^{D_R(v)} \vert \dif x,
  \end{align*}
  where $V_{x}^{D_R(v)}$ is the set of vertices of degree at least $x$
  in $D_R(v)$ and $k_{\min}$ and $k_{\max}$ are the maximum and
  minimum degree in $D_R(v)|_{\rho_I(2\pi)}^{\rho_O}$, respectively.

  We start with computing $k_{\min}$ and $k_{\max}$.  Using
  Equation~\eqref{eq:expectedDegree} and
  Corollaries~\ref{col:chernoff-bound}
  and~\ref{col:chernoff-lower-bound}, we obtain that a vertex of
  radius $\rho_O = R - (2 + \varepsilon)\log\log n$, for any
  $\varepsilon \in (0, 1)$, has degree $k_{\min} = \Theta((\log n)^{1
    + \varepsilon / 2})$ with high probability.  Moreover, by the same
  argumentation as in the proof of
  Lemma~\ref{lem:sumOfDegreesCentralPart} no vertex with smaller
  radius has smaller degree, with high probability.  Additionally, a
  vertex with radius $\rho_I(2 \pi) = R - 1/\alpha(\log n - \log\log
  n)$ has degree $k_{\max} = \mathcal{O}((n / \log n)^{1/(2\alpha)})$
  and no vertex with larger radius has larger degree, with high
  probability.  It follows that we can use the bound shown in
  Lemma~\ref{lem:numberOfVerticesWithCertainDegreeNeighborhood} for
  $|V_{x}^{D_R(v)}|$.  Thus, we obtain
  \begin{align*}
    \sum_{w \in D_R(v)|_{\rho_I(2\pi)}^{\rho_O}} \deg(w) &= \mathcal{\tilde O}\big(k_{\min} \cdot n^{1-1/(2\alpha)}k_{\min}^{-(2\alpha - 1)}\big) + \mathcal {\tilde O}\big(n^{1-1/(2\alpha)} \int_{k_{\min}}^{k_{\max}} x^{-(2\alpha - 1)} \dif x \big).
  \end{align*}
  Replacing $k_{\min}$ and simplifying the first term in the sum
  yields $\mathcal {\tilde O}(n^{1-1/(2 \alpha)})$, which is smaller
  than the claimed bound.  For the second term, we obtain
  \begin{align*}
    \mathcal {\tilde O}\left(n^{1-1/(2\alpha)} \int_{k_{\min}}^{k_{\max}} x^{-(2\alpha - 1)} \dif x\right) &= \mathcal {\tilde O}\left(n^{1-1/(2\alpha)} \left[k_{\max}^{2-2\alpha} - k_{\min}^{2-2\alpha}\right]\right).
  \end{align*}
  Dropping the negative term and replacing $k_{\max} = \mathcal
  {\tilde O}(n^{1/(2\alpha)})$, we obtain $\mathcal {\tilde
    O}(n^{1-1/(2\alpha) + 1/\alpha - 1}) = \mathcal{\tilde
    O}(n^{1/(2\alpha)})$.
\end{proof}

\subsection{The Outer Part of the Disk}

At this point we have bounded the sum of the degrees of the vertices
with radius at most $\rho_O = R - (2 + \varepsilon)\log\log n$ (for
any constant $\varepsilon \in (0, 1)$) that lie in a sector of angular
width $\varphi \in \Omega(\log(n)^{2\alpha + 1}/n)$ or in the
neighborhood of a vertex with radius $\rho$.  It remains to bound the
sums when considering vertices with radii larger than $\rho_O$.

To bound the sum of the vertex degrees in the outer part of a sector
$S|_{\rho_O}^R$, we start by computing the expected value.


\begin{lemma}
  \label{lem:expectationInOuterRingSector}
  Let $G$ be a hyperbolic random graph. For a sector $S$ of angular
  width~$\varphi$, the degrees of vertices in $S|_{\rho_O}^R$ sum to
  $\Theta(\varphi n)$ in expectation.
\end{lemma}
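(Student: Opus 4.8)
The plan is to compute the expectation directly by linearity, reducing to a single vertex and then integrating over its radius. Since the $n$ vertices are placed i.i.d., the expected sum of degrees in $S_{\rho_2}^R$ equals $n$ times the expected contribution of one vertex, i.e.\ $n \cdot \mathbb{E}[\deg(v)\,\mathbf{1}[v \in S_{\rho_2}^R]]$ for a single vertex $v$. First I would exploit the rotational symmetry of the model: conditioned on the radius $r(v) = r$, the distribution of $\deg(v)$ is independent of the angular coordinate of $v$, so the indicator that $v$ falls into the sector (an event of probability $\varphi/(2\pi)$ that depends only on the angle) factors out. This reduces the task to estimating $n \cdot \frac{\varphi}{2\pi}\int_{\rho_2}^{R} \mathbb{E}[\deg(v)\mid r(v) = r]\, f(r)\,\dif r$, where $\rho_2 = \frac{1}{\alpha}\log n$.

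Next I would insert the two asymptotic ingredients already available in the excerpt. By \eqref{eq:expectedDegree} the probability that a fixed other vertex is a neighbor of $v$ is $\mu(D_0^R(r)) = \Theta(e^{-r/2})$, so $\mathbb{E}[\deg(v) \mid r(v) = r] = (n-1)\,\Theta(e^{-r/2}) = \Theta(n e^{-r/2})$; and by \eqref{eq:probDensity} the radial density is $f(r) = \Theta(e^{\alpha(r-R)})$. Substituting gives an integrand proportional to $e^{-\alpha R} e^{(\alpha - 1/2)r}$, and since $\alpha > 1/2$ the exponent $(\alpha - 1/2)r$ is increasing in $r$, so the integral is dominated by its upper endpoint $r = R$ and evaluates to $\Theta(e^{-\alpha R} e^{(\alpha-1/2)R}) = \Theta(e^{-R/2})$. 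Finally I would substitute $R = 2\log n + C$, which yields $e^{-R/2} = \Theta(1/n)$, and collect the factors: $n \cdot \frac{\varphi}{2\pi}\cdot \Theta(n) \cdot \Theta(1/n) = \Theta(\varphi n)$, as claimed.

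The computation is essentially routine; the only points requiring care are that the $\Theta$-estimates in \eqref{eq:probDensity} and \eqref{eq:expectedDegree} supply matching upper \emph{and} lower bounds, so that the final value is genuinely $\Theta(\varphi n)$ and not merely $\mathcal{O}(\varphi n)$, and that the contribution of the lower integration limit is negligible. For the latter I would verify that $R - \rho_2 = (2 - 1/\alpha)\log n + C \to \infty$ for $\alpha \in (0.5,1)$, so that $e^{(\alpha-1/2)\rho_2}$ is of strictly lower order than $e^{(\alpha-1/2)R}$ and does not affect the asymptotics. The only genuinely conceptual step is the factorization in the first paragraph, namely justifying the independence of $\deg(v)$ from the sector-membership indicator given the radius; this follows from the rotational invariance of the underlying point process, and I expect it to be the main obstacle only in the sense of stating it carefully rather than in any computational difficulty.
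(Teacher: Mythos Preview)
Your proposal is correct and follows essentially the same approach as the paper: both reduce by linearity and i.i.d.\ to a single vertex, then evaluate the integral $\int_{\rho_2}^{R}\mathbb{E}[\deg(v)\mid r(v)=r]\,f(r)\,\dif r$ using $f(r)=\Theta(e^{\alpha(r-R)})$ and $\mathbb{E}[\deg(v)\mid r(v)=r]=\Theta(ne^{-r/2})$, observing that the $e^{(\alpha-1/2)r}$ factor makes the upper limit dominate. The only cosmetic difference is that the paper first splits $\mathbb{E}[X_v\deg(v)]$ into $\Pr[v\in S_{\rho_2}^R]\cdot\mathbb{E}[\deg(v)\mid v\in S_{\rho_2}^R]$ and computes each factor separately, whereas you invoke rotational symmetry to pull out the $\varphi/(2\pi)$ factor in one step; the underlying calculation is identical.
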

\begin{proof}
  Let $\deg(v)$ be the random variable describing the degree of a
  vertex $v$.  Moreover, let~$X_v$ be the indicator variable that is
  $1$ if $v \in S|_{\rho_O}^R$ and $0$ otherwise.  Then the expected
  sum of the degrees of vertices in $S|_{\rho_O}^R$ is given by
  \begin{align*}
    \mathbb{E}\left[ \sum_{v \in V} X_v \cdot \deg(v) \right] = \sum_{v \in V} \mathbb{E}[X_v\cdot\deg(v)] = n \cdot \Pr[v \in S|_{\rho_O}^R]\cdot\mathbb{E}[\deg(v)\mid v \in S|_{\rho_O}^R].
  \end{align*}

  Note that $\Pr[v\in S|_{\rho_O}^R]$ is simply the measure
  $\mu(S|_{\rho_O}^R)$.  As the angular coordinate is uniformly
  distributed, the whole sector $S$ has measure $\Theta(\varphi)$.
  Moreover, the region of the disk containing the points with constant
  distance to the boundary has constant measure.  Thus, the measure of
  $S|_{\rho_O}^R$ is also in $\Theta(\varphi)$.  For the sake of
  completeness, the measure of~$S|_{\rho_O}^R$ can be formally
  computed as
  \begin{align*}
    \mu(S|_{\rho_O}^R) &= \mu(S \setminus S|_0^{\rho_O}) \\
                       &= \frac{\varphi}{2\pi} (1 - \mu(D_{\rho_O})) \\
                       &= \frac{\varphi}{2\pi} (1 - e^{-\alpha(R - \rho_O)}(1 + o(1))) \notag \\
                       &= \frac{\varphi}{2\pi} (1 - \mathcal{O}((\log n)^{-\alpha(2 + \varepsilon)})) \notag \\
                       &= \Theta(\varphi).
  \end{align*}
  It remains to determine $\mathbb{E}[\deg(v)~\vert~v \in
  S|_{\rho_O}^R]$, which can be done as follows.
 \begin{align}
   \mathbb{E}[\deg(v) \mid v \in S|_{\rho_O}^R] &= \iint_{S|_{\rho_O}^R} \mathbb{E}[\deg(v)~\vert~r(v) = r] \frac{f(r, \phi)}{\mu(S|_{\rho_O}^R)} \dif \phi \dif r \notag \\
                                                &= \frac{1}{\mu(S|_{\rho_O}^R)}\cdot \int_{\rho_O}^{R} \int_{0}^{\varphi} \mathbb{E}[\deg(v)~\vert~r(v) = r] f(r, \phi) \dif \phi \dif r \notag \\
                                                &= \Theta(1) \cdot \int_{\rho_O}^{R} \mathbb{E}[\deg(v)~\vert~r(v) = r] f(r) \dif r \notag \\
                                                &= \Theta(1) \cdot n \cdot e^{-\alpha R} \int_{\rho_O}^{R} e^{(\alpha - 1/2)r} \dif r \notag \\
                                                &= \Theta(1) \cdot n \cdot e^{-\alpha R}\left[ e^{(\alpha - 1/2)R} -  e^{(\alpha - 1/2)\rho_O}  \right]\notag \\
                                                &= \Theta(1) \cdot n \cdot e^{-R/2} \left[ 1 - e^{-(\alpha - 1/2)(R - \rho_O)} \right] \notag
  \end{align}
  Note that the part in brackets is bounded by a constant.  Moreover,
  as $R = 2\log n + C$, $n\cdot e^{-R/2}$ is constant as well.  Thus,
  $\mathbb{E}[\deg(v) \mid v \in S|_{\rho_O}^R]$ is in $\Theta(1)$.
  It follows that the expected sum of the degrees is $\Theta(\varphi
  n)$.
\end{proof}

Unfortunately, the sum of the vertex degrees in $S|_{\rho_O}^{R}$ is
not concentrated sufficiently well around its expectation to conclude
that this bound also holds with high probability.  The problem lies
with the high-degree vertices in the graph, which can be adjacent to
none or all vertices in $S|_{\rho_O}^{R}$ depending on their
positions.  That is, small perturbations of the position of a single
high-degree vertex can change the sum by too much.  To overcome this
issue, we consider the impact of high-degree vertices separately.  To
this end, we partition the edge set that contributes to the degrees of
the vertices in $S|_{\rho_O}^{R}$ into two sets $E_I$ and $E_O$,
denoting the \emph{inner edges} where the other endpoint is in
$D_R|_0^{\rho_O}$ and the \emph{outer edges} where the other endpoint
is in $D_R|_{\rho_O}^{R}$.  The sum of the degrees of the vertices in
$S|_{\rho_O}^{R}$ can then be bounded by taking the number of inner
edges and adding them to twice the number of outer edges.  That is,
\begin{align*}
  \sum_{v \in S|_{\rho_O}^{R}} \deg(v) \le |E_I| + 2|E_O|.
\end{align*}

Since $E_I$ denotes all edges with one endpoint in $S|_{\rho_O}^{R}$
and the other in the inner or central part of the disk, we can obtain
an upper bound on the first summand by summing the degrees of the
vertices in $D_R|_0^{\rho_O}$ that are adjacent to any vertex in
$S|_{\rho_O}^{R}$.  Since $\rho \le \rho_O$, we have $S|_{\rho_O}^{R}
\subseteq S|_{\rho}^{R}$, allowing us to apply
Lemma~\ref{lem:boundaryCover} to conclude that all such vertices are
contained in $S$ or are neighbors of the two hypothetical corner
vertices $c_1$ and $c_2$, which both have radius $\rho$.  Thus,
$|E_I|$ can be bounded by the sum of the degrees of vertices in a
sector and in the neighborhood of a vertex with radius $\rho$, but
constrained to vertices in the inner and central parts of the disk.
Corresponding bounds that hold with high probability have been
determined above.  For the sector we obtain an upper bound of
$\mathcal{\tilde{O}}(\delta_{\max})$ for the inner part
(Corollary~\ref{cor:number_edges_in_inner_sector}) and
$\mathcal{O}(\varphi n)$ for the central part
(Lemma~\ref{lem:sumOfDegreesCentralPart}).  For the neighborhood of a
vertex with radius $\rho$ we have $\mathcal{\tilde{O}}(\delta_{\max})$
for the inner part
(Corollary~\ref{cor:number_edges_in_inner_neighborhood}) and
$\mathcal{\tilde{O}}(n^{1/(2\alpha)})$ for the central part
(Lemma~\ref{lem:degrees-hypothetical-vertex}).  Taking them together,
we obtain the following corollary.

\begin{corollary}
  \label{col:number-inner-edges}
  Let $G$ be a hyperbolic random graph.  For every sector $S$ of
  angular width $\varphi \in \Omega(\log(n)^{2\alpha + 1}/n)$, the
  number of edges with one endpoint in $S|_{\rho_O}^{R}$ and the other
  in $D_R|_0^{\rho_O}$ is in $\mathcal{\tilde{O}}(\varphi n +
  n^{1/(2\alpha)} + \delta_{\max})$, with high probability.
\end{corollary}

To obtain an upper bound on the second part of the above sum, we aim
to apply a \emph{method of typical bounded differences} based on the
fact that changing the position of a single vertex has
\emph{typically} only little impact on the number of outer edges.  The
idea is as follows.  We consider $|E_O|$ as a function that only
depends on the positions $P_1, \dots, P_n$ of the vertices in the
graph and we ask ourselves: How much can $|E_O|$ change, if we alter
the position of a single vertex $i$?  Clearly, this change can be
large in the worst case.  Assume that we move~$i$ from outside
$D_R|_{\rho_O}^{R}$ into $S|_{\rho_O}^{R}$.  Then, $i$ does not
contribute anything to $|E_O|$ before the move and the increase in
$|E_O|$ depends on the number of outer edges that are incident to~$i$
after the move, which can be $n - 1$ in the worst case.  However, it
is very unlikely that a vertex in $S|_{\rho_O}^{R}$ has this many
neighbors that lie in the outer part of the disk.  In fact, its degree
is typically much smaller.  To formalize this, we represent the
typical case using an event~$A$, denoting that the degree of such a
vertex is at most a constant factor larger than the expected degree of
a vertex with radius $\rho_O = R - (2 + \varepsilon)\log\log n$ for
any constant $\varepsilon \in (0, 1)$.  More precisely,~$A$ denotes
the event in which all disks of radius $R$ with center
in~$D_R|_{\rho_O}^{R}$ contain at most
$\mathcal{O}(\log(n)^{1+\varepsilon/2})$ vertices.  In this case,
moving a vertex $i$ in the same way as before leads to a much smaller
increase in the number of outer edges.  Assuming that~$A$ holds before
the move, there are at most $\mathcal{O}(\log(n)^{1 + \varepsilon/2})$
outer edges incident to $i$ after the move, which corresponds to the
increase of $|E_O|$.  The following lemma defines the event $A$
formally and shows that it holds with high probability.

\begin{lemma}
  \label{lem:typical-event-whp}
  Let $G$ be a hyperbolic random graph and let $\rho_O = R - (2 +
  \varepsilon)\log\log(n)$ for any constant $\varepsilon \in (0, 1)$.
  Then, all disks $D$ with radius $R$ and center in
  $D_R|_{\rho_O}^{R}$ contain at most $|\{v \in D\}| =
  \mathcal{O}(\log(n)^{1 + \varepsilon/2})$ vertices, with probability
  $1 - \mathcal{O}(n^{-c})$ for any constant $c$.
\end{lemma}
\begin{proof}
  Let $D$ be a disk of radius $R$ and center
  $P \in D_R|_{\rho_O}^{R}$.  By
  Lemma~\ref{lem:smallerRadiusIncreasesNeighborhood}, a valid upper
  bound on the expected number of vertices in $D$ can be obtained by
  considering the disk~$D'$ at center $P'$ instead, which has the same
  angular coordinate as~$P$ and radius $\rho_O$.  Thus, using
  Lemma~\eqref{eq:expectedDegree} we get
  \begin{align*}
    \mathbb{E}[|\{v \in D\}|] \le \mathbb{E}[|\{v \in D'\}|] = \mathcal{O}(n e^{-\rho_O / 2}) = \mathcal{O}(\log(n)^{1 + \varepsilon / 2}).
  \end{align*}
  Moreover, since $\varepsilon > 0$, this bound is $\omega(\log n)$
  and we can apply Corollary~\ref{col:chernoff-bound} to conclude that
  $|\{v \in D\}| \in \mathcal{O}(\log(n)^{1 + \varepsilon / 2})$ holds
  with probability $1 - \mathcal{O}(n^{-c'})$ for any constant $c'$.
  To obtain a bound that holds for every possible angular coordinate
  for $P$, we apply Lemma~\ref{fig:disk-cover}, which allows us to
  translate our bound that holds for any given disk $D$ with
  probability $1 - \mathcal{O}(n^{-c'})$ to the same asymptotic bound
  that holds with probability $1 - \mathcal{O}(n^{-c' + 1})$ for all
  possible angular coordinates.  Choosing $c' = c + 1$ then yields the
  claim.
\end{proof}

So while moving a single vertex leads to a large change in the number
of outer edges $|E_O|$ in the worst case, we observe only small
changes in the typical case $A$.  Formally, we say that a function $f
\colon \Omega^{n} \rightarrow \mathbb{R}$ satisfies the \emph{typical
  bounded differences condition} with respect to an event $A \subseteq
\Omega^{n}$ if for all $i \in \{1, \dots, n\}$ there exist
$\Delta_i^{A} \le \Delta_i$ such that
\begin{align*}
  \label{eq:typical-bounded-differences-condition}
  |f(\boldsymbol{x}) - f(\boldsymbol{x}')| \le
  \begin{cases}
    \Delta_i^{A}, & \text{if}~\boldsymbol{x} \in A,\\
    \Delta_i, & \text{otherwise},
  \end{cases}
\end{align*}
for all $\boldsymbol{x}, \boldsymbol{x}' \in \Omega^{n}$ that differ
only in the $i$th component.

\begin{theorem}[{Method of Typical Bounded Differences,~\cite[Theorem 2\protect\footnotemark]{w-mtbd-16}}]
  \footnotetext{We state a slightly simplified version in order to
    facilitate understandability.  The original theorem allows for the
    random variables $X_1, \dots, X_n$ to be defined in different
    sample spaces.}
  \label{thm:typical-bounded-differences}
  Let $X_1, \dots, X_n \in \Omega$ be independent random variables and
  let $A \subseteq \Omega^{n}$ be an event.  Furthermore, let $f
  \colon \Omega^{n} \rightarrow \mathbb{R}$ be a function that
  satisfies the typical bounded differences condition with respect to
  $A$ and with parameters $\Delta_i^{A} \le \Delta_i$ for $i \in \{1,
  \dots, n\}$.  Then for all $\varepsilon_1, \dots, \varepsilon_n \in
  (0, 1]$ there exists an event~$B$ satisfying $\bar{B} \subseteq A$
  and $\Pr[B] \le \Pr[\bar{A}] \cdot \sum_i 1/\varepsilon_i$, such
  that for $\Delta = \sum_i (\Delta_i^{A} + \varepsilon_i (\Delta_i -
  \Delta_i^{A}))^2$ and $t \ge 0$ it holds that
  \begin{align*}
    \Pr[f > \mathbb{E}[f] + t \land \bar{B}] \le e^{-t^2 / (2\Delta)}.
  \end{align*}
\end{theorem}

Intuitively, the choice of the values for $\varepsilon_i$ has two
effects.  On the one hand, choosing $\varepsilon_i$ small allows us to
compensate for a potentially large worst-case change $\Delta_i$.  On
the other hand, this also increases the bound on the probability of
the event $B$ that represents the atypical case.  However, in that
case one can still obtain meaningful bounds if the typical event $A$
occurs with high enough probability.  In the following, we show that
an upper bound on the expected value $\mathbb{E}[f]$ is sufficient to
apply the method of typical bounded differences, before applying it to
bound the number of outer edges in a sector.

\begin{corollary}
  \label{col:typical-bounded-differences}
  Let $X_1, \dots, X_n \in \Omega$ be independent random variables and
  let $A \subseteq \Omega^{n}$ be an event.  Furthermore, let $f
  \colon \Omega^{n} \rightarrow \mathbb{R}$ be a function that
  satisfies the typical bounded differences condition with respect to
  $A$ and with parameters $\Delta_i^{A} \le \Delta_i$ for $i \in \{1,
  \dots, n\}$ and let $g(n)$ be an upper bound on $\mathbb{E}[f]$.
  Then for all $\varepsilon_1, \dots, \varepsilon_n \in (0, 1]$,
  $\Delta = \sum_i (\Delta_i^{A} + \varepsilon_i (\Delta_i -
  \Delta_i^{A}))^2$, and $c \ge 1$ it holds that
  \begin{align*}
    \Pr[f > c g(n)] \le e^{-((c - 1) g(n))^2 / (2\Delta)} + \Pr[\bar{A}] \sum_i 1 / \varepsilon_i.
  \end{align*}
\end{corollary}
\begin{proof}
  Let $h(n) \ge 0$ be a function with $f' = f + h(n)$ such that
  $\mathbb{E}[f'] = g(n)$.  Note that $h(n)$ exists since $g(n) \ge
  \mathbb{E}[f]$.  As a consequence, we have $f \le f'$ for all
  outcomes of $X_1, \dots, X_n$ and it holds that
  \begin{align*}
    |f'(\boldsymbol{x}) - f'(\boldsymbol{x}')| = |f(\boldsymbol{x}) + h(n) - f(\boldsymbol{x}') - h(n)| = |f(\boldsymbol{x}) - f(\boldsymbol{x}')| ,
  \end{align*}
  for all $\boldsymbol{x}, \boldsymbol{x}' \in \Omega^{n}$.
  Consequently, $f'$ satisfies the typical bounded differences
  condition with respect to $A$ with the same parameters $\Delta_i^A
  \le \Delta_i$ as $f$.  Since $f \le f'$ it holds that
  \begin{align*}
    \Pr[f > c g(n)] \le \Pr[f' > c g(n)] = \Pr[f' > c \mathbb{E}[f']].
  \end{align*}
  By choosing $t = (c - 1)\mathbb{E}[f']$ this can be written as
  \begin{align*}
    \Pr[f' > c \mathbb{E}[f']] = \Pr[f' > \mathbb{E}[f'] + t].
  \end{align*}
  Theorem~\ref{thm:typical-bounded-differences} now guarantees the
  existence of an event $B$ with $\Pr[B] \le \Pr[\bar{A}] \cdot \sum_i
  1 / \varepsilon_i$ and $\bar{B} \subseteq A$, such that $\Pr[f' >
  \mathbb{E}[f'] + t \land \bar{B}] \le e^{-t^2/(2 \Delta)}$.  To
  bound $\Pr[f' > \mathbb{E}[f'] + t]$ we apply the law of total
  probability and consider the events $B$ and $\bar{B}$ separately
  \begin{align*}
    \Pr[f' > \mathbb{E}[f'] + t] = \Pr[f' > \mathbb{E}[f'] + t \mid \bar{B}] \cdot \Pr[\bar{B}] + \Pr[f' > \mathbb{E}[f'] + t \mid B] \cdot \Pr[B].
  \end{align*}
  The first part of the sum can be simplified using the definition of
  conditional probabilities.  Moreover, it holds that $\Pr[f' >
  \mathbb{E}[f'] + t \mid B] \le 1$.  Thus, we can bound the above
  term by
  \begin{align*}
    \Pr[f' > \mathbb{E}[f'] + t] \le \Pr[f' > \mathbb{E}[f'] + t \land \bar{B}] + \Pr[B].
  \end{align*}
  Both remaining summands can now be bounded using the upper bounds
  that we previously obtained by applying
  Theorem~\ref{thm:typical-bounded-differences}, i.e., $\Pr[f' >
  \mathbb{E}[f'] + t \land \bar{B}] \le e^{- t^2/(2 \Delta)}$ and
  $\Pr[B] \le \Pr[\bar{A}] \cdot \sum_i 1 / \varepsilon_i$.  Thus,
  \begin{align*}
    \Pr[f' > \mathbb{E}[f'] + t] \le e^{-t^2/(2 \Delta)} + \Pr[\bar{A}] \cdot \sum_i 1 / \varepsilon_i.
  \end{align*}
  Finally, since $t$ was chosen as $t = (c - 1)\mathbb{E}[f']$ and
  since $\mathbb{E}[f'] = g(n)$, we obtain the claimed bound.
\end{proof}

We are now ready to bound the number $|E_O|$ of outer edges, i.e,
edges that are incident to vertices in a sector $S|_{\rho_O}^{R}$ and
have their other endpoint in $D_R|_{\rho_O}^{R}$.

\begin{lemma}
  \label{col:number-outer-edges}
  Let $G$ be a hyperbolic random graph.  For every sector $S$ of
  angular width $\varphi \in \Omega(\log(n)^{2} / n^{1/2})$, the
  number of edges with one endpoint in $S|_{\rho_O}^{R}$ and the other
  in $D_R|_{\rho_O}^{R}$ is in $\mathcal{O}(\varphi n)$, with high
  probability.
\end{lemma}
\begin{proof}
  First note that, analogous to the proof of
  Lemma~\ref{lem:sumOfDegreesCentralPart}, we can cover the disk
  with~$\mathcal{O}(n)$ sectors of angular width $2\varphi$ such that
  any sector of angular width $\varphi$ lies completely in one of
  them.  In the following, we show that the claimed bound holds with
  probability $\mathcal{O}(n^{-2})$ for a single sector $S$ of twice
  the width.\footnote{We note that this factor of 2 vanishes in the
    asymptotics throughout the proof.}  Applying union bound then
  yields the claim.

  We consider $|E_O|$, the number of edges with one endpoint in
  $S|_{\rho_O}^{R}$ and the other in $D_R|_{\rho_O}^{R}$, as a
  function hat only depends on the positions $P_1, \dots, P_n$ of the
  vertices in the graph.  To show that $|E_O|$ does not exceed an
  upper bound with high probability, we aim to apply the method of
  typical bounded differences
  (Corollary~\ref{col:typical-bounded-differences}).  We represent the
  typical case with an event $A$, denoting that all disks $D$ of
  radius $R$ and center in $D_R|_{\rho_O}^{R}$ contain at
  most~$\mathcal{O}(\log(n)^{1 + \varepsilon/2})$ vertices for any
  constant $\varepsilon \in (0, 1)$.  In order to determine the
  parameters $\Delta_i^{A} \le \Delta_i$ for $i \in \{1, \dots, n\}$
  with which $|E_O|$ fulfills the typical bounded differences
  condition with respect to $A$, we have to bound the maximum change
  in $|E_O|$ obtained by moving a single vertex.  As argued before,
  this change is at most $\Delta_i = n - 1$ for all
  $i \in \{1, \dots, n\}$ in the worst case.  To bound the
  $\Delta_i^{A}$, we start with a configuration of vertex coordinates
  in which the event $A$ holds.  In this case, it is easy to see that
  moving a single vertex $i$ changes~$|E_O|$ by at most
  $\Delta_i^{A} = \mathcal{O}(\log(n)^{1 + \varepsilon / 2})$ for all
  $i \in \{1, \dots, n\}$, since the degree of $i$ is at most this
  large after the move and so is the number of outer edges it
  contributes to $|E_O|$.

  We are now ready to apply the method of typical bounded differences
  (Corollary~\ref{col:typical-bounded-differences}).  For an upper
  bound $g(n)$ on~$|E_O|$, any constant $c > 1$, and all
  $\varepsilon_1, \dots, \varepsilon_n \in (0, 1]$ it states that
  \begin{align*}
    \Pr[|E_O| > c g(n)] \le e^{-((c - 1) g(n))^2 / (2 \Delta)} + \Pr[\bar A] \sum_i 1 / \varepsilon_i,
  \end{align*}
  where $\Delta = \sum_i (\Delta_i^{A} + \varepsilon_i (\Delta_i -
  \Delta_i^{A}))^2$.  First note that a valid upper bound on the
  expected number of outer edges incident to vertices in
  $S|_{\rho_O}^{R}$ is given by the expected sum of the degrees of
  these vertices.  Thus, by
  Lemma~\ref{lem:expectationInOuterRingSector} we can choose $g(n) =
  \Theta(\varphi n)$.  Moreover, by choosing $\varepsilon_i = 1/n$ for
  all $i \in \{1, \dots, n\}$ and since $\Delta_i = n - 1$ and
  $\Delta_i^{A} = \mathcal{O}(\log(n)^{1 + \varepsilon / 2})$ for all
  $i \in \{1, \dots, n\}$, we can compute $\Delta$ as
  \begin{align*}
    \Delta &= \sum_i (\Delta_i^{A} + \varepsilon_i (\Delta_i - \Delta_i^{A}))^2 \\
           &= \mathcal{O} \left( n \cdot \left( \log(n)^{1 + \varepsilon / 2} + 1/n (n - \log(n)^{1 + \varepsilon / 2}) \right)^2 \right) \\
           &= \mathcal{O} \left( n \cdot \left( \log(n)^{1 + \varepsilon / 2} + (1 - o(1)) \right)^2 \right) \\
           &= \mathcal{O} \left( n \cdot \log(n)^{2 + \varepsilon} \right)
  \end{align*}
  Consequently, the above probability can be bounded by
  \begin{align*}
    \Pr[|E_O| > c g(n)] &\le \exp \left(- \Theta \left( \frac{(\varphi n)^2}{n \log(n)^{2 + \varepsilon}} \right) \right) + \Pr[\bar A] \cdot n^2 \\
                        &\le \exp \left(- \Theta \left( \frac{\varphi^2 n}{\log(n)^{2 + \varepsilon}} \right) \right) + \Pr[\bar A] \cdot n^2.
  \end{align*}
  Since $\varphi \in \Omega(\log(n)^{2} / n^{1/2})$ is a precondition
  of this lemma and since $\varepsilon < 1$, we can conclude that the
  fraction is $\omega(\log n)$, which means that the first summand is
  $\mathcal{O}(n^{-c'})$ for any constant~$c'$.  Moreover, by
  Lemma~\ref{lem:typical-event-whp} event $A$ holds with probability
  $1 - \mathcal{O}(n^{-c'})$ for any constant $c'$.  Choosing $c' = 3$
  then yields the claim.
\end{proof}

\subsection{The Complete Disk}

Having obtained the required bounds for the inner, central, and outer
parts of the disk, we can now combine them to bound the sum of the
degrees in a sector and in the neighborhoods of the hypothetical
corner vertices.  We start with
Theorem~\ref{thm:main-theorem-first-phase}, which bounds the sum of
degrees in a sector.  To improve readability, we restate the theorem
here.

\setcounter{section}{3}
\begin{backInTime}{main-theorem-first-phase}
  \begin{theorem}
    \mainTheoremFirstPhaseText
  \end{theorem}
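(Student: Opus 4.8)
The plan is to bound $D_S := \sum_{v \in S} \deg(v)$ for a sector $S$ of angular width $\varphi$ by pairing a pointwise estimate on each vertex's degree with a layer-wise count of how many vertices of $S$ sit at each radius, and to peel off the few high-degree vertices near the origin so that they are charged to the $\delta_{\max}$ term rather than to $\varphi n$.

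First I would prove a pointwise degree bound holding for all vertices simultaneously with high probability. Conditioning on the position of a vertex $v$ of radius $r$, its degree is a sum of $n-1$ independent indicators with success probability $\mu(D_0^R(v)) = \Theta(e^{-r/2})$ by~\eqref{eq:expectedDegree}, so $\mathbb E[\deg(v)] = \Theta(n e^{-r/2})$. Applying Corollary~\ref{cor:chernoff-bound} with $f(n) = \max(\Theta(n e^{-r/2}), \log n)$ gives $\deg(v) = \mathcal O(n e^{-r/2} + \log n)$ with probability $1 - \mathcal O(n^{-c})$ for any desired constant $c$, and a union bound over the $n$ vertices makes this hold for every vertex at once.

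Next I would split $S$ at a threshold radius $r_I$ chosen so that the expected number of sector vertices of radius at most $r_I$ equals $\Theta(\log n)$; by~\eqref{eq:originBallApproximation} and the uniform angular coordinate this reads $n \frac{\varphi}{2\pi} e^{\alpha(r_I - R)} = \Theta(\log n)$. For the inner part $S_0^{r_I}$, Corollary~\ref{cor:chernoff-bound} yields $|S_0^{r_I}| = \mathcal O(\log n)$ with high probability, and bounding each such degree by $\delta_{\max}$ contributes only $\mathcal O(\log n \cdot \delta_{\max}) = \mathcal{\tilde O}(\delta_{\max})$. For the outer part I would cut $[r_I, R]$ into $\mathcal O(R) = \mathcal O(\log n)$ unit-width layers $B_k = D_{r_k}^{r_k + 1}$ and set $S_k = S \cap B_k$. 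Here $\mathbb E[|S_k|] = \Theta(\varphi n\, e^{\alpha(r_k - R)})$, which the hypothesis $\varphi = \Omega(n^{1 - 1/\alpha}\log n)$ keeps at $\Omega(\log n)$ in the radii that matter, so Corollary~\ref{cor:chernoff-bound} and a union bound over layers give $|S_k| = \mathcal O(\varphi n\, e^{\alpha(r_k - R)} + \log n)$ with high probability. Multiplying each layer size by the pointwise degree bound evaluated at its inner radius $r_k$ then reduces the claim to estimating $\sum_k \bigl(\varphi n\, e^{\alpha(r_k - R)} + \log n\bigr)\bigl(n e^{-r_k/2} + \log n\bigr)$.

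It remains to evaluate this geometric sum over the $\mathcal O(\log n)$ layers. Its leading term $\Theta(\varphi n^2 e^{-\alpha R}) \sum_k e^{(\alpha - 1/2) r_k}$ is dominated by the layers with $r_k$ near $R$ because $\alpha > 1/2$; using $R = 2\log n + C$, hence $e^{-R/2} = \Theta(1/n)$, it equals $\Theta(\varphi n)$. The mixed term $\sum_k \varphi n\, e^{\alpha(r_k - R)} \log n$ telescopes to $\mathcal O(\varphi n \log n)$ and the constant term over $\mathcal O(\log n)$ layers is $\mathcal O(\log^3 n)$, both $\mathcal{\tilde O}(\varphi n)$. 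The delicate term is $\log n \sum_k n e^{-r_k/2}$, a geometric sum dominated by its smallest radius $r_I$ and hence of order $\log n \cdot n e^{-r_I/2}$; since $\varphi \le 2\pi$ the defining relation for $r_I$ forces $r_I$ to exceed the radius $r_{\min}$ of the globally innermost vertex, and as that vertex has degree $\Omega(n e^{-r_{\min}/2}) \ge \Omega(n e^{-r_I/2})$ with high probability yet at most $\delta_{\max}$, we get $n e^{-r_I/2} = \mathcal O(\delta_{\max})$, so this term is absorbed into $\mathcal{\tilde O}(\delta_{\max})$. Summing the contributions gives $D_S = \mathcal{\tilde O}(\varphi n + \delta_{\max})$.

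The main obstacle is precisely that $D_S$ does not concentrate through a single Chernoff bound: the degrees are heavy-tailed (a central vertex alone reaches $\delta_{\max}$) and the thinness of $S$ leaves its inner layers sparsely and erratically occupied. The heart of the argument is to confine all of this erratic mass to the $\mathcal O(\log n)$ near-origin vertices that may be charged to $\delta_{\max}$, while showing the densely populated outer layers---whose population is guaranteed to exceed $\log n$ exactly by the hypothesis $\varphi = \Omega(n^{1 - 1/\alpha}\log n)$---concentrate; verifying that the transition term $\log n \cdot n e^{-r_I/2}$ stays $\mathcal O(\delta_{\max})$ is the subtle point.
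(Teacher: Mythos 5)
Your route is genuinely different from the paper's and, up to two fixable points, it works. The paper splits $S$ at $\rho_1$ (your $r_I$) and additionally at $\rho_2 = \log n/\alpha$: for the central band it counts, for each degree $k$, the vertices of degree at least $k$ via the radial bound $r_k + \eps$ and integrates (Lemmas~\ref{lem:probDegreeTooLarge} and~\ref{lem:numberNodesWithDegreeLeastK}), and for the outer band---where individual degrees are \emph{not} concentrated around their means---it resorts to the method of average bounded differences with bad-event conditioning (Theorem~\ref{thm:averageBoundedDifferences}), which is where the hypothesis $\varphi = \Omega(n^{1-1/\alpha}\log n)$ is really consumed and which yields a clean $\mathcal O(\varphi n)$ without logarithmic loss. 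You observe instead that a one-sided bound $\deg(v) = \mathcal O(ne^{-r/2} + \log n)$ holds simultaneously for all vertices by plain Chernoff plus a union bound (two-sided concentration is not needed for an upper bound), and that pairing this with concentrated per-layer counts and a geometric sum already gives $\mathcal{\tilde O}(\varphi n + \delta_{\max})$. This is simpler and avoids the bounded-differences machinery entirely; the price is extra $\polylog$ factors, which the $\mathcal{\tilde O}$ in the statement absorbs anyway. Two remarks. First, the theorem quantifies over \emph{every} sector of width $\varphi$, of which there are uncountably many; you must, as the paper does at the start of the proof of Lemma~\ref{lem:concentration-outer-part}, cover them by $n$ sectors of width $2\varphi$ and union-bound over the resulting $\mathcal O(n\log n)$ sector--layer events---routine, since your per-event failure probabilities are $\mathcal O(n^{-c})$ for arbitrary $c$, but it has to be said. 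Second, your ``delicate term'' $\log n\sum_k ne^{-r_k/2}$ needs neither the appeal to $\delta_{\max}$ nor the lower-tail estimates it rests on (existence of a vertex below $r_I$ and a lower bound on its degree, which Corollary~\ref{cor:chernoff-bound} does not supply): by the very choice of $r_I$, every layer with $r_k \ge r_I$ satisfies $\varphi n e^{\alpha(r_k - R)} = \Omega(\log n)$, so this term is dominated, layer by layer, by your leading term and is already $\mathcal O(\varphi n)$.
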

\end{backInTime}
\setcounter{section}{4}
\begin{proof}
  For the inner and central parts of every sector the sum of the
  vertex degrees is bounded by
  $\mathcal{\tilde O}(\varphi n + \delta_{\max})$ with high
  probability due to Corollary~\ref{cor:number_edges_in_inner_sector}
  and Lemma~\ref{lem:sumOfDegreesCentralPart}.  As argued above, the
  sum of the degrees of the remaining vertices, i.e., vertices with
  radius at least $\rho_O$, can be bounded by counting the number of
  inner edges and adding twice the number of outer edges.  Since
  $\varphi \in \Omega(\log(n)^{2} / n^{1 / 2})$, we can apply
  Corollary~\ref{col:number-inner-edges} and
  Corollary~\ref{col:number-outer-edges} to conclude that the
  corresponding sum is bounded by
  $\mathcal{\tilde O}(\varphi n + n^{1/(2 \alpha)} + \delta_{\max})$,
  with high probability.
\end{proof}

Lastly, it remains to bound the sum of the degrees of the neighbors of
the hypothetical corner vertices that were used to bound the size of
the search space in the second phase.  Again, for the sake of
readability, we restate the corresponding lemma here.

\setcounter{section}{3}
\begin{backInTime}{main-theorem-second-phase}
  \begin{lemma}
    \mainTheoremSecondPhaseText
  \end{lemma}
\end{backInTime}
\setcounter{section}{4}
\begin{proof}
  For the inner and central parts of the neighborhood of a vertex with
  radius $\rho$ and arbitrary angular coordinate the sum of the
  degrees is bounded by $\mathcal{\tilde{O}}(\delta_{\max} + n^{1/(2
    \alpha)})$ with high probability, due to
  Corollary~\ref{cor:number_edges_in_inner_neighborhood} and
  Lemma~\ref{lem:degrees-hypothetical-vertex}.  For the sum of the
  degrees in the outer part of the disk, note that all neighbors of
  radius at least $\rho$ have angular distance at most $\varphi =
  \mathcal{O}(n^{-(1/\alpha - 1)})$; see
  Section~\ref{sec:search-space-first}.  Thus, we can use
  Theorem~\ref{thm:main-theorem-first-phase} to conclude that claimed
  bound holds for the sum of their degrees.  Note that if $\varphi$ is
  too small to meet the requirements of
  Theorem~\ref{thm:main-theorem-first-phase}, we can choose $\varphi =
  \mathcal{\tilde O}(n^{-1/2})$ as a valid upper bound to conclude
  that the sum of degrees in the outer part of the neighborhood is in
  $\mathcal{\tilde{O}}(n^{1/2})$, which is
  $\mathcal{\tilde{O}}(n^{1/(2\alpha)})$ for $\alpha \in (1/2, 1)$.
\end{proof}

\section{Conclusion}
\label{sec:conclusion}

\begin{figure}
  \centering
  \includegraphics[scale=0.9425]{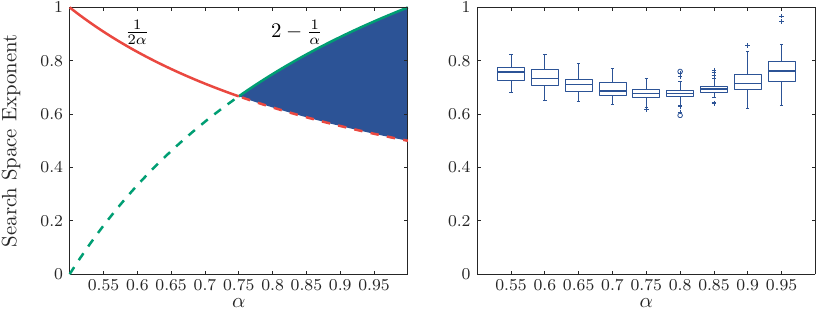}
  \caption{Left: The exponent of our theoretical bound depending on
    $\alpha$.  Right: The corresponding empirically measured search
    spaces.  The data was obtained by generating $20$ hyperbolic
    random graphs with average degree roughly $8$ for each shown
    $\alpha$ and each $n \in \{100\si{k}, 200\si{k}, 300\si{k}\}$.
    For each graph we sampled $300\si{k}$ start--destination pairs and
    report the maximum number of edges explored in one search.  The
    numbers are normalized with the total number of edges $m$ of the
    graph such that $x$ is plotted for a search space of size $m^x$.}
  \label{fig:realData}
\end{figure}
In the following, we briefly discuss why we think that the bound
$\mathcal{\tilde O}(n^{2-1/\alpha} \allowbreak + n^{1/(2\alpha)} +
\delta_{\max})$ is rather tight; see Figure~\ref{fig:realData}~(left)
for a plot of the exponents.  Clearly, the maximum degree of the graph
is a lower bound, i.e., we cannot improve the $\delta_{\max}$.  As
$\delta_{\max} = \tilde \Theta(n^{1/(2\alpha)})$ holds almost
surely~\cite{gpp-rhg-12}, we also cannot improve below
$\mathcal{\tilde O}(n^{1/(2\alpha)})$.  For the term $n^{2-1/\alpha}$
we do not have a lower bound.  Thus, the blue region in
Figure~\ref{fig:realData}~(left) is the only part where our bound can
potentially be improved.  However, by only making a single step from a
vertex with radius $\rho = 1/\alpha(\log n - \log\log n)$, we can
already reach vertices with angular distance $\Theta(n^{-(1/\alpha -
  1)})$.  Thus, it seems likely, that there exists a
start--destination pair such that all vertices within a sector of this
angular width are actually explored.  As such a sector contains
$\Theta(n^{2-1/\alpha})$ vertices, our bound seems rather tight (at
least asymptotically and up to poly-logarithmic factors).  For a
comparison of our theoretical bound with actual search-space sizes in
hyperbolic random graphs; see Figure~\ref{fig:realData}.

Finally, in order to put our results into perspective, we discuss the
following question: How does a heterogeneous degree distribution
impact the exponent in the running time of the bidirectional BFS?
First, considering networks with no underlying geometry, the exponent
is~$1/2$ for homogeneous networks and $(4 - \beta)/2 = 3/2 - \alpha$
for heterogeneous networks with power-law exponent
$\beta$~\cite{bn-kaabra-16}.  That is, when increasing the
heterogeneity by letting $\alpha$ go from~$1$ to $1/2$, the exponent
increases from $1/2$ to $1$.  This can be explained by the fact that a
heterogeneous degree distribution leads to high-degree vertices, which
leads to a higher running time when they are explored.

On hyperbolic random graphs, we get the same effect.  The
$1/(2\alpha)$-part of the exponent (the red function in
Figure~\ref{fig:realData}) is very similar to the above $3/2 -
\alpha$.  However, due to the underlying geometry, the heterogeneity
has another effect, expressed by the $2 - 1/\alpha$-part of the
exponent (the green function in Figure~\ref{fig:realData}).  This can
be explained as follows.  The underlying geometry constrains the parts
of the graph that a vertex can connect to.  As a result, the search
space cannot expand sufficiently fast on homogeneous networks and we
only get a constant speedup, i.e., the exponent is $1$. However,
increasing the heterogeneity leads to high degree vertices, which
accelerate the expansion of the search spaces, leading to a lower
exponent.

In conclusion, we can say that heterogeneity has two effects on the
bidirectional BFS:
\begin{enumerate}
\item More heterogeneity leads to higher running times as exploring
  high degree-vertices is costly.
\item More heterogeneity leads to lower running times as high
  degree-vertices let the search spaces expand quickly.
\end{enumerate}
For networks without underlying geometry, the second effect is
irrelevant, as the search space always expands quickly due to the
independence of edges.  Thus, the running time is better the more
homogeneous the network.  For networks with underlying geometry, both
effects play an important role leading to the v-shape in
Figure~\ref{fig:realData}.  For high heterogeneity ($\alpha < 0.75$),
the cost of exploring high degree vertices dominates, leading to the
exponent $1/(2\alpha)$.  For lower heterogeneity ($\alpha > 0.75$),
the slower expanding search space due to the underlying geometry
dominates, leading to the exponent $2 - 1/\alpha$.

\bibliography{efficient_shortest_paths}
\end{document}